\def\expandafter\normalsize\expandafter{%
	\normalsize
	\setlength\abovedisplayskip{5pt}
	\setlength\belowdisplayskip{5pt}
	\setlength\abovedisplayshortskip{5pt}
	\setlength\belowdisplayshortskip{5pt}
}
\newtheoremstyle{break}
{\topsep}{\topsep}%
{\itshape}{}%
{\bfseries}{}%
{\newline}{}%
\theoremstyle{break}
\newtheorem{proposition}{Proposition}
\newtheorem{assumption} {Assumption}
\newtheorem{lemma}{Lemma}
\newtheorem{theorem}{Theorem}
\newenvironment{customax}[1]
{\innercustomthm}
{\endinnercustomthm}
\renewcommand{\vec}[1]{\mathbf{#1}}
\renewcommand{\vec}[1]{\boldsymbol{#1}}
\newcommand{\Real}{\mathbb{R}}
\newcommand{\Realo}{\mathbb{R}_{\geq 0}}
\newcommand{\upalpha}{\bar{\alpha}}
\newcommand{\loalpha}{\underline{\alpha}}
\newcommand{\network}{(\mathbf{g}, \vec{\pi})}
\newcommand{\networkp}{(\mathbf{g}',\vec{\pi}')}
\newcommand{\graph}{\mathbf{g}}
\newcommand{\mass}{\pi}
\newcommand{\vecmass}{\vec{\mass}}
\newcommand{\Pol}{\mathcal{P}}
\newcommand{\Nspace}{\mathcal{N}}
\newcommand{\distance}{d}
\newcommand{\x}{\vec{x}}
\newcommand{\y}{\vec{y}}
\newcommand{\nodes}{N}
\newcommand{\norm}[1]{\left\lVert#1\right\rVert}
\title{Polarization in Networks:\\ Identification-alienation Framework\thanks{ We thank Joan Esteban, Jan Fa\l{}kowski, Antonin Mac\'{e}, and Fernando Vega-Redondo for their comments and to Debraj Ray for our discussion at the  beginning of the project. This work was supported by the ANR [project ANR 18-CE26-0020-01].}}
\author{
Kenan Huremovi\'{c}\thanks{IMT School for Advanced Studies Lucca, Piazza S. Francesco, 19, 55100 Lucca, Italy. E-mail: \texttt{kenan.huremovic@imtlucca.it}.}
\and
Ali Ozkes\thanks{Wirtschaftsuniversit\"{a}t Wien, Institute for Markets and Strategy, Welthandelsplatz 1, 1020, Vienna, Austria. E-mail: \texttt{ali.ozkes@wu.ac.at}}
}
\begin{document}
\maketitle

\begin{abstract}

\begin{singlespace}
	We introduce a model of polarization in networks as a unifying setting for the measurement of polarization that covers a wide range of applications. We consider a substantially general setup for this purpose: node- and edge-weighted, undirected, and connected networks.  We generalize the axiomatic characterization of \cite{esteban1994} and show that only a particular instance within this class can be used justifiably to measure polarization in networks.   
\end{singlespace}
\end{abstract}

\textbf{JEL codes:}  D63, D70, P16 

\textbf{Keywords:} measurement, networks, polarization 
\newpage

\vspace*{-5pt}
\section{Introduction}


 Polarization in a population denotes an intensified disconnect among its groups. The analysis of the sources and the consequences of polarization depends highly on what is measured and how, which, in turn, is strictly contingent on the particular context. For instance, while in the context of American politics polarization is perceived as the division of masses into the cultural camps of liberals and conservatives, in the context of European multi-party parliaments, it is seen as the existence of ideologically cohesive and distinct party blocks.\footnote{See \cite{fiorina2005culture} and \cite{maoz2010political} for the two different contexts.} So even the term ``political polarization" is not indicative of what is being measured and how. Existing literature reflects this complexity, and there is an abundance of measures without a unified formalism that applies to comparable contexts. 
 

Although there are substantial differences among existing measures across fields, one ubiquitous feature can be identified. Namely, most of the current measures are proposed in settings with a uni-dimensional scalar attribute on which the polarization is assumed to occur. However, conflicts in societies are in general related to an irreducibly complex set of attributes and most of the empirical work rely on categorical data on various characteristics.\footnote{Examples include ethnolinguistics \citep{montalvo2008discrete}, ethnic power relations \citep{wimmer2009ethnic}, and political retweets \citep{conover2011political}.} Dimensionality reduction approaches are called for in many instances, because the existing polarization measures allow for only a uni-dimensional, or at most a bi-dimensional domain \citep[][]{hill2015disconnect}. However, reduced dimensions can be questionable for their capacity to represent the actual phenomenon of interest \citep{kam2017polarization}.

In this paper, we propose the formalism of network theory to study  the measurement of polarization as it delivers the desired generality and spans a large variety of contexts. We fully characterize a polarization measure {following the} axiomatic setting introduced {by \cite{esteban1994} (henceforth ER)} for distributions on the real line. Same as ER, we restrict ourselves to distributions with finite support.


Our setup is built on undirected networks in which both nodes and links are weighted. A node in the network represents a certain attribute or grouping of individuals in the population. The weight of a node corresponds to the number of individuals in the population that are characterized by the attribute or members of the group (\emph{e.g.,} a political party). The {weighted} links describe (direct) bilateral relationships between nodes. { This setup is quite general and can represent a wide range of settings in which measuring polarization is an issue of first-order importance. We describe a number of important examples in the next section, {with a particular focus on the political domain, not only because it is a central point of discussion, but also because it comprises of a variety of aspects that can be captured distinctly within network formalism}. For instance, we show how elite polarization can be modeled within our framework through networks of politicians, parties, or policy space. Mass polarization, on the other hand, can be modeled through a network of opinions or preferences. {Going beyond the political domain, we furthermore} discuss how our approach can be used to study polarization in any setting with multidimensional distributions with finite support. }

{The axiomatic approach developed by ER for distributions with finite support on the real line led to the development of measures in several other domains, such as measures for continuous distributions as in \cite{duclos2004polarization} and measures for binary classifications as in \cite{montalvo2008discrete} (henceforth MRQ). Most of the applications employing measures within this line of work lie in the fields of income inequality and social conflicts. In their seminal contribution in this context, ER conceptualize polarization as the aggregate antagonism in a population. The effective antagonism an individual feels against another depends on how alienated she feels from the other's group and how identified she feels with her own group. According to ER, a population in which individuals are identified within groups is polarized if there is a high level of intra-group homogeneity, a high level of inter-group heterogeneity, and a small number of large-enough groups. They deliver a characterization of a class of polarization measures, based on an axiomatization built around distributional properties and not confined to incomes or wealth, although the main motivations of ER were about income and wealth distributions.} 

{Following ER, we provide an axiomatic characterization for measures of network polarization. We argue that networks represent a powerful tool to capture any distribution with a finite support and a notion of distance.} Thus, the strength of our contribution lies in the fact that we deliver an axiomatic foundation for a family of measures that are applicable in a significantly  larger set of domains. Furthermore, as any distribution considered in ER or MRQ can be represented as a network, our work can be seen as a unifying generalization, with ER and MRQ as special cases.

The class of measures characterized by ER is identified by the range of values that parameter $\alpha$, which captures the importance of identification in the effective antagonism, can take. Our {first} result shows, quite surprisingly, that this class is thinned down by a unique value, \emph{i.e.,} $\alpha=1$ (Theorem \ref{thm:ThmMain}).\footnote{MRQ also identify $\alpha=1$ in their setup, which is a special case of ours. Furthermore, their axiomatization is different than ours and ER.} {Note that adaptations of these axioms are neither trivial nor straightforward, as networks allow for a much larger generality in representing discrete distributions than the real line. Recent literature on the measurement of polarization carried along the restricting assumption that the attributes can be captured by the values of a scalar variable. We take off where ER leave, and deliver an analysis that does not ``sweep a serious dimensionality issue under the rug" (ER, p. 823). Our approach accommodates a significantly larger variety of settings that are not confined to scalar attributes, and naturally include the case of the Euclidean distance on the real line as a special case.}  This entails a solution to an unresolved issue in this line of research as a by-product, in that our results point to the choice of an exact value within the interval $(0,\alpha^*\simeq 1.6]$.\footnote{ER proposes {further} restrictions in that regard by imposing an additional axiom (Axiom 4) {that brings about a lower bound, \emph{i.e.,} $\alpha\geq1$.}}

{It is desirable that polarization measures attain their maximum at the symmetric bipolar distribution.} Contrary to the real {intervals}, in networks there can be any finite number of nodes with {maximal} distance between them. Still, we show that any measure within the family we characterize is maximized at the symmetric bipolar distribution --- when the population is symmetrically distributed among the \textit{two} most distant nodes in the network (Proposition \ref{prop:MaxElement}).

Finally, we show that if we restrict our attention to particular classes of networks emerging in certain domains such as language trees ({class of tree networks}) or income distributions ({class of line networks}), one of the axioms, Axiom \ref{ax:Axiom3}, can be weakened in a systematic way to allow for a wider class of measures that can be used consistently (Theorem \ref{thm:ThmGen}).  For instance, in the special case of line networks that can be used to represent income distributions, our set of axioms and the class of measures reduce to the ones in ER.

\medskip

\begin{flushleft}{\textbf{\large{Related literature}}}\end{flushleft}
{It presents a challenge to pay a fair tribute to the ever-growing literature on the measurement of polarization. Here, we refer to a set of papers in different domains and discuss a few closely related ones. We mention several other works in} {Section \ref{sec:Conclusion}.}

{Polarization is studied in social sciences (particularly in economics and political science) in relation to economic inequality \citep{esteban2007extension,esteban2012comparing,zhang2001difference}, social conflict \citep{desmet2017culture,montalvo2008discrete,ostby2008polarization}, political economy \citep{aghion2004endogenous,desmet2012political,lindqvist2010political}, international relations
\citep{maoz2006network}, political ideologies
\citep{abramowitz2008polarization,fiorina2008political,lelkes2016mass,martin2017bias},  political sentiments 
\citep{boxell2017greater,garcia2015ideological}, and social attitudes
\citep{dimaggio1996have,lee2014social,mccright2011politicization},
among others. }

{We want to emphasize that we are not the first to consider an ER-type approach to the measurement of polarization in networks. For instance, both \cite{esteban1999conflict} and \cite{esteban2011linking} explore this issue. However, to the best of our knowledge, this is the first paper to provide an axiomatic characterization for measures of polarization in networks.\footnote{{\cite{esteban1999conflict} arrive at $\alpha=1$ in their attempt to connect the intensity of conflict to polarization, without an axiomatic discussion, {while \cite{esteban2011linking} supplement the four axioms in \cite{duclos2004polarization} with a fifth axiom that delivers $\alpha=1$.}}}
\cite{fowler2006connecting, fowler2006legislative} and \cite{maoz2006network} are among the leading examples where network formalism is proposed for the measurement of polarization, without an axiomatic treatment.\footnote{
The measure \cite{maoz2006network} uses is developed in the unpublished working paper by \cite{maoz2006WPnetwork}, and while inspired by \cite{duclos2004polarization}, it is only shown to satisfy an extended and qualitatively different set of properties.} Finally, \cite{permanyer2015measuring} characterize a distinct family of measures for categorical attributes by using identification-alienation framework and a number of additional axioms. }

The ER polarization index is often used in applied work, in some cases (for instance \cite{aghion2004endogenous, alesina2003fractionalization,collier2004greed, desmet2009linguistic} and \cite{dower2017colonial})  with data that cannot be represented as a distribution on the real line, but can be represented as a network. Our paper provides a justification to use the ER measure with $\alpha =1$ in such cases, as done in \cite{desmet2009linguistic} and \cite{dower2017colonial}, but also suggests that using values of $\alpha$ that are different than 1 (as in \cite{alesina2003fractionalization, aghion2004endogenous}, and \cite{collier2004greed}) may not be appropriate.

{The rest of the paper is  organized as follows. In Section \ref{sec:PolMeasure} we describe the environment we study and illustrate the {wide} applicability of our approach. In Section \ref{sec:Charachterization} we define polarization, state the axioms, and deliver our major results. In Section \ref{sec:Discussion} we discuss the importance of network structure in terms of polarization and formally illustrate the connection between our work and previous literature. We conclude in Section \ref{sec:Conclusion}. }

\vspace*{-5pt}
\section{Networks and polarization} \label{sec:PolMeasure}

We consider a population in which individuals belong to $n>0$ mutually exclusive groups of potentially different sizes. A group may be,  for instance,  a political party, ethnic group, or a set of individuals that share the same attributes. For each group  $i$, $\mass_i \geq 0$ denotes the number of individuals in group $i$. 
 When $\pi_i=0$ we say that group $i$ is \textit{empty}. Vector $\vecmass \in \Realo^{n}$ describes the distribution of a population among $n$ groups.

 Bilateral relationships between  groups are described with an undirected weighted  graph (UWG) $\graph$,  with the set of nodes, $N(\graph)$, equal to the set of groups, and the set of undirected links (or edges) $E(\graph) =\{\{i,j\}: \{i,j\} \in N^2 \text{ and } i \neq j\}$.  As usual, we denote the edge between nodes $i$ and $j$ in graph $\graph$ with  $ij$ and the weight of that edge with $g_{ij}\geq 0$.  While we treat weights quite generally, it is useful to think of  $g_{ij}$ as the direct distance between two connected nodes $i$ and $j$ --  a higher $g_{ij}$ implies a \textit{weaker} connection between $i$ and $j$.\footnote{The particular interpretation of weights $(g_{ij})_{i, j \in N}$ depends on the application, as we demonstrate in Section \ref{Sec:Examples}.} For the remaining part of the paper we write  $ij \in \graph$ instead of $\{i,j\} \in E(\graph)$ to indicate that there is an edge between nodes $i$ and $j$ in $\graph$. When nodes $i$ and $j$ are not directly connected, we write $ij \notin \graph$.  Moreover, since groups are represented as nodes,  we use words \textit{group} and \textit{node} interchangeably. 

We restrict our attention to connected graphs, \emph{i.e.}, graphs in which there is a path connecting any two nodes.\footnote{{We consider only connected graphs in this paper.} Our insights can be extended, in a somewhat ad-hoc manner, to cases when $\graph$ is unconnected, for instance, by defining the distance between nodes from different components of $\graph$ to be equal to the longest path between any two connected nodes in $\graph$.}  The distance between nodes $i$ and $j$ in $\graph$,  denoted with $d_\graph(i,j)$,  is measured using the notion of the shortest path. That is, while there may be different routes one can take to reach node $j$  starting from node $i$ and moving along the links in  $\graph$, the distance between $i$ and $j$ is the length of the shortest path. This notion of distance, also known as the \emph{geodesic distance}, is the standard in graph theory and the theory of networks \citep{newman2003structure,jackson2008}. 

Let $\mathcal{G}_n$ denote the set of all UWGs with $n$ nodes, and let $\{\mathcal{G}_n\}_{n \in \mathbb{N}}$  denote the family of all UWGs with any finite number of nodes. The main object of our analysis is the ordered pair $(\graph, \vecmass) \in \mathcal{G}_n \times \Realo^{n}$, which represents a weighted (node-weighted and link-weighted) network.  We use $\mathcal{N}$ to denote the set of all networks with finite number of nodes. 

In the special case when $\vecmass =\vec{1}$,  $\network$ coincides with the standard notion of a ({link}-) weighted network.\footnote{Alternatively, one can think of $\network$ as a distribution $\vecmass$ on graph $\graph$.}  If, additionally, $g_{ij}=1$ whenever $ij \in \graph$, then $\network$ is a binary network.  Thus $\network$ is a fairly general object that can be used to represent any undirected network we observe, allowing for {weights on nodes and edges.} In Section \ref{sec:Discussion}  we show that any distribution studied in ER or any {classification} {covered by} MRQ can be represented as a network.

A polarization measure is a mapping $\mathcal{P}: \Nspace \rightarrow \Realo$ that assigns to each network $\network \in \Nspace$  a non-negative real number. 

{Before turning to the axiomatic analysis, we discuss a number of examples in which data can be represented as a network and measuring polarization is of interest.} 
 
\subsection{Examples} \label{Sec:Examples}

\subsubsection{Polarization in political networks}
We consider several networks that arise in politics,  each of which encodes a different aspect of the prevailing political climate. In particular, we consider situations in which collection of individuals express their preferences over alternatives, natural examples of which include a parliament voting on bills and an electorate choosing among candidates. We discuss how these two can be modeled as networks in order to measure elite and mass polarization.\footnote{See \cite{kearney2019analyzing} for a review focusing on networks in the political domain from a general perspective.} 

We start with the case of a parliament with possibly more than two parties. Let there be $N\in\mathbb{N}$ representatives denoted by $\mathcal{R} = \{1,..., N\}$ and $T\in\mathbb{N}$ parties denoted by $\mathcal{T}=\{t_1,\dots,t_T\}$. Suppose there are $k\in\mathbb{N}$ bills that are sponsored by representatives, either individually or in groups, which are thereafter voted for approval in the parliament. Let $v_{ij}\in\{0,1\}$ denote the vote of $i$ for the bill $j\in \{1,2,...,k\}$ and $\mathcal{V}=\{0,1\}^k$ denote the set of possible vote combinations.

\textbf{Network of representatives, $(\graph', \vecmass')$, link-weighted.}\\
 The set of nodes in graph $\graph'$ is $\mathcal{R}=\{1,\dots,N\}$. For any two representatives $i$ and $j$, let $g'_{ij}\geq 0$ denote the share of bills on which they do not vote in the same way.\footnote{Alternatively, one can model that two representatives are connected (with weight $1$) if they vote together for more than $50\%$ of the bills and not connected otherwise, in which case we would have an unweighted network.} Thus, $g'_{ij}$ stands for the (inverse of the) strength of their connection, where $g'_{ij} =0$ indicates that $i$ and $j$ always vote the same way.\footnote{The fact that $g'_{ij} = 0$ does not indicate that link between $i$ and $j$ does not exist, but that the distance between $i$ and $j$ is 0.} When they never vote the same way on any bill, they are not directly connected, hence  $ij \notin \graph'$. The size of every node $i\in \mathcal{R}$ is $\pi'_i =1$, thus $\vecmass'={\bf 1}$, as each node represents a unique representative. An example of networks as such can be found in \cite{andris2015rise}.
 
 \textbf{Network of co-sponsorships, $(\hat{{\graph}}', \hat{\vecmass}')$, unweighted.}\\
 The set of nodes in $\hat{\graph}'$ is $\mathcal{R}=\{1,\dots,N\}$. $\hat{g}'_{ij} =1$ if $i$ and $j$ co-sponsored al least one bill together, and $ij \notin \hat{\graph}' $ otherwise.\footnote{ Alternatively,  $\hat{g}'_{ij}$  may reflect how many bills  $i$ and $j$ co-sponsored together, in which case, we would have a link-weighted network.} The size of each node  $i\in \mathcal{R}$ is $\hat{\pi}'_i =1$, thus $\hat{\vecmass}'={\bf 1}$, since each node represents a unique representative. \cite{fowler2006connecting} studies this type of networks.

\textbf{Network of votes, $(\tilde{\graph}', \tilde{\vecmass}')$, node-weighted.}\\
 The set of nodes in graph  $\tilde{\graph}'$ is $\mathcal{V}=\{v_1,\dots,v_{2^k}\}$.  Two nodes  (vote combinations) $v_i$ and $v_{j}$  are connected, \emph{i.e.}, $ij \in \tilde{\graph}'$, whenever  $v_{i}$ and $v_{j}$ differ only in a single coordinate (bill). Each link in $\tilde{g}'$ has a weight 1.
$\tilde{\pi}'_{{i}}$ denotes the number of individuals with voting profile $v_{i}$, and $\tilde{\vecmass}'$ is the corresponding distribution. \cite{brams2007minimax} and \cite{moody2013portrait}, among others, study this type of networks.

\textbf{Network of parties, $({\bf {\bar g}}',\bar{\vecmass}')$, node- and link-weighted.}\\
The set of nodes in ${\bf {\bar g}}'$ is $\mathcal{T}=\{t_1,\dots,t_T\}$. $\bar{g}'_{ij}$ denotes the share of bills on which a majority of representatives in both parties vote the same way.\footnote{$\bar{g}'_{ij}$ captures the ideological distance \emph{i.e.,} the extent the policies of two parties overlap, which can be measured in different ways. \cite{maoz2010political} take, for instance, the similarities in party manifestos.} Thus, $ij \notin{\bf {\bar g}}'$ indicates that there is no bill that is supported (or opposed) by a majority of representatives in both parties. The size of a node $t_i\in \mathcal{T}$, $\bar{\pi}'_i$, denotes the number of seats of the party $i$ in the parliament.\footnote{Alternatively, $\bar{\vecmass}'$ can be taken as ${\bf 1}$, disregarding party sizes and focusing on closeness among parties, in which case we would have a link-weighted network.} See \cite{maoz2010political} for an analysis on party networks.

Each network we describe above focuses on a different aspect of the political activities in the parliament. Accordingly, the corresponding measures of polarization provide different, yet complementary, insights into congressional polarization. For instance, $\mathcal{P}(\graph', \vecmass')$ tells us how polarized the policy positions of representatives based on their vote histories are, regardless of their party affiliations, whereas $\mathcal{P}(\bf {\bar g',\bar{\vecmass}'})$ measures the party-level polarization.\footnote{We write $\mathcal{P}\network$ in place of $\mathcal{P}(\network)$ with a slight abuse of notation.} Also,  $\mathcal{P}(\tilde{\graph}', \tilde{\vecmass}')$ is informative about the polarization with respect to policy space, while $\mathcal{P}(\hat{\graph}', \hat{\vecmass}')$ captures the polarization among representatives with respect to policy cooperation.   

For illustration, let us more closely compare networks  $(\graph', \vecmass')$ and $(\tilde{\graph}', \tilde{\vecmass}')$, which are based on exactly the same data, \emph{i.e.,} votes on bills.  Consider the following example with 3 bills and 8 representatives, where ``$+$" in \eqref{mat:NetParties} represents approval for a bill and ``$-$" represents disapproval. 

\begin{align}\label{mat:NetParties}
\begin{array}{rccccc}
 & R_{1-3} & R_4 & R_{5-6} & R_7 & R_8  \\ \cline{2-6}
\text{ I} & + & - & - &+ & + \\
\text{ II} & - & + & + &- & + \\
\text{ III}& - & - & + &+ & + \\
\end{array} 
\end{align}

{Panel (a) of Figure \ref{strategic} below shows the corresponding network of representatives, whereas the panel (b) shows the corresponding network of votes}.\footnote{Both networks  $(\graph', \vecmass')$ and $(\tilde{\graph}', \tilde{\vecmass}')$ have a level of ``structural regularity." Graph $\graph'$ leads to a complete network structure in the sense that each node is connected to any other node, even though there is a substantial heterogeneity across weights of the links. Graph $\tilde{\graph}'$ has a lattice structure. This is by no means necessary for our approach, which is applicable to connected networks with arbitrary structure.  For instance, as in \cite{andris2015rise}, two representatives can be connected if they vote the same way sufficiently many times, then the $\graph'$ will not have the complete graph structure. Co-sponsorship networks such as $(\hat{\graph}', \hat{\vecmass}')$ have, in general, quite irregular structures, as in \cite{fowler2006legislative}.} Since the two networks describe two different sets of relations in the legislation, we may expect that the measured level of polarization differs between them. Nevertheless, any polarization measure in our framework is applicable to both cases. To obtain a deeper insight, for instance, one can also compare polarization of networks representing different types of relationships with a suitable normalization \emph{e.g.,} by dividing the polarization index with the maximal value it can attain.

     \begin{figure}[h]
       \centering
       \subcaptionbox{Network of represe\-ntatives. Nodes denote representatives and two nodes are not connected if they do not agree on any issue. The thickness of edges indicate weights.\label{fig:Reps}}
          {\includegraphics[scale=1.5]{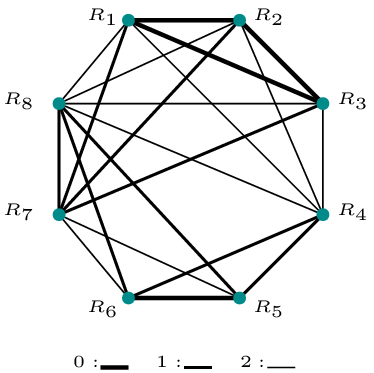}}
         \;\;\;\;\;        \;\;\;\;\;        \;\;\;\;\;
       \subcaptionbox{Network of votes. The nodes represent all possible vote combinations \emph{e.g.}, 100 represents the approval of only first bill.\label{fig:Votes}}
           {\includegraphics[scale=1.25]{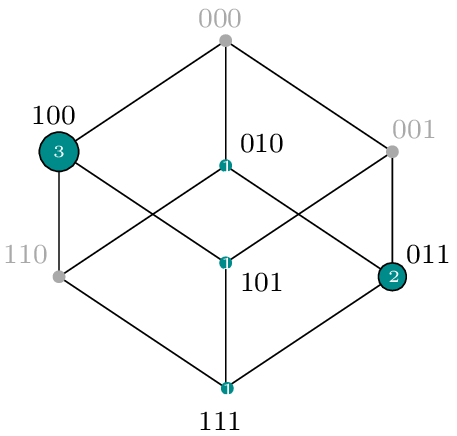}}
       \caption{\small Two possible network representations of the same profile of votes of representatives.}\label{strategic}
     \end{figure}

We next turn to the case of mass polarization. Our example is concerned with an electorate choosing among candidates for an office (or individuals expressing preferences over {policy} alternatives such as \emph{remain}, \emph{soft-Brexit}, and \emph{hard-Brexit}).\footnote{Many other networks can be considered in the context of voter preferences and affective (mass) polarization has been a major concern in recent years. For instance, for the context of European multi-party systems, \cite{reiljan2020fear} proposes a measure based on the divergence of partisan affective evaluations between in-party and out-parties, which could be represented on a network. Here, we take the example of ordinal preferences for the sake of the simplicity of the exposition.} Let there be a set of alternatives $X=\{x_1,\dots,x_m\}$ and each individual $i\in\{1,\dots,n\}$ be endowed with a preference $P_i\subseteq X\times X$ that is a linear order, \emph{i.e.}, a complete, antisymmetric, and transitive binary relation on $X.$ Let $\mathcal{L}$ denote the set of all preferences over $X$ and $\mathcal{L}^n$ be the set of profile of preferences.

\textbf{Network of preferences, $(\graph'', \vecmass'')$.} The set of nodes is $\mathcal{L}=(p_1,\dots,p_{m!})$. Two nodes $p_i$ and $p_j$ are connected with  $g''_{ij} =1$, whenever $p_i$ can be obtained from $p_j$ by switching only one binary preference, \emph{i.e.,} the Kemeny distance between $p_i$ and $p_j$ is 1 \citep{kemeny1959mathematics}.\footnote{A network of preferences can be represented as a special network of votes, in which each bill represents a pairwise comparison of alternatives and transitivity is imposed.} We denote with ${\pi}''_{{i}}$ the  number of individuals with preference $p_{i}$, and with ${\vecmass}''$ the corresponding distribution. See \cite{cervone2012voting} for a study on preference networks.\footnote{Often without explicitly using the language of networks, graph theoretical representations of preferences are studied in the social choice literature widely. There is also a growing interest in measuring polarization in preference profiles, as in \cite{can2015measuring,can2017generalized}. Note that network $(\graph'', \vecmass'')$ could  alternatively be defined using a weighted metric as in \cite{can2014weighted}.}

For an illustration, let $\{a,b,c\}$ be the set of alternatives and consider the preference profile with 11 (millions of) individuals represented by \eqref{mat:Preferences}. 

\begin{align}\label{mat:Preferences}
\begin{array}{cccc}
2 & 3 & 2 & 4  \\ \toprule
a & b & c & c \\ 
b & a & a & b\\ 
c & c & b & a \\ 
\end{array} 
\end{align}

This profile of preferences can be represented with a network as depicted as in Figure \ref{fig:pref}.

\begin{figure}[h]
      \centering
        {\includegraphics[scale=1.25]{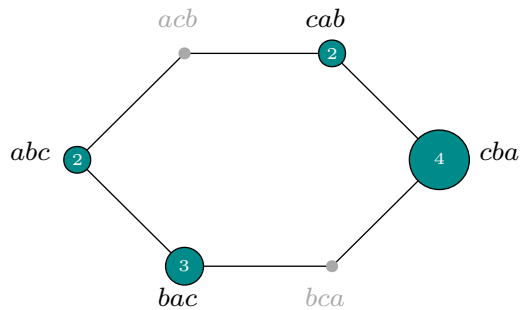}}
          \vspace{-2mm}
      \caption{\small A distribution over a preference network with 3 alternatives and 11 individuals.}
      \label{fig:pref}
    \end{figure}

\subsubsection{Beyond the political domain}

{While we paid a close attention to examples of networks from the political domain,} our approach can naturally be applied in a much wider range of applications, {not necessarily confined to those that are commonly studied using networks}. For instance, our setting can be adopted to study multidimensional polarization in any distribution with a discrete support. To see how, take the example of polarization in a society with respect to income and education (both measured on some discrete, increasing scale).  The set of all pairs of  income ($\iota$) and education ($\epsilon$) levels defines the set of nodes in the network. Two nodes $\x = (x_\iota, x_\epsilon)$ and $\y = (y_\iota, y_\epsilon)$ are connected, with link $\x\y$  of weight $s$ ($g_{\x\y} =s$),  if, for instance,  $|x_\iota - y_\iota| + |x_\epsilon-y_\epsilon|=s$, that is if the Manhattan distance between $\x$ and $\y$ is equal to $s$. 


Other potential applications include conflicts between groups \citep{esteban1999conflict,esteban2011linking}, private provision of public goods \citep{bramoulle2007public},  research output and citation networks \citep{leskovec2005graphs}, friendship networks \citep{calvo2009peer}, and trust networks \citep{richardson2003trust}.

\section{Identification-alienation framework and axiomatization} \label{sec:Charachterization}

To recall, our objective in this paper is two-fold. First, 
we propose network theory as a unifying {formalism} to study polarization without any constraint on dimensionality. Second, we present a theoretical foundation for a family of polarization measures in this {setting}. For the latter, we closely follow the axiomatic approach in ER, who envisage polarization as the aggregate antagonism in a population, {based on the identification and alienation among individuals}. 

First, as in ER, we require polarization measures to satisfy the following property that ensures invariance of the measure  with respect to the size of the population $\sum_{i \in \nodes(\graph)}\pi_i$. {Thus, in fact, ${\vecmass}$ may represent also a probability mass function.}

\begin{assumption}[Homotheticity]\label{ass:homothetic}
	$\mathcal{P}\network \geq \mathcal{P}\networkp$ $\implies$ $\mathcal{P}(\graph, \lambda \vecmass) \geq \mathcal{P}(\graph', \lambda \vecmass')$ for all $\network,\networkp$ $\in \Nspace$ and  $\lambda >0$.
	
\end{assumption}	

The antagonism between individuals depend on how they identify themselves and how alienated they feel from others.  In the network setup we propose, individuals in a population are identified only with their definitive attributes, which are represented as nodes in the network.  As emphasized before, these attributes are by no means restricted to singletons or a uni-dimensional space. 

The effect of the feeling of \emph{identification} of each individual on her antagonism towards another is measured in relation to the presence of others that share the same attributes, hence are in the same node. This effect is the basis of the intra-group homogeneity, and we denote it with $I(\pi_i)$. Thus, when the nodes represent individuals, each individual feels the same level of identification, whereas when nodes represent groups of individuals, the identification an individual feels is a function of the size its node ($I(\pi_i)$).\footnote{This implies that two groups (nodes) of the same size exhibit the same level of identification. While potentially restrictive, this is standard in the identification-alienation framework \citep{esteban1994,esteban2012comparing}.}  The only assumption we make on the identification function $I:\Realo \rightarrow \Realo$ is that $I(\pi_i)>0$ whenever $\pi_{i}>0$. 

The distance an individual perceives between herself and any other individual is a natural component of the {antagonism} between individuals as it forms the basis of the inter-group heterogeneity. We measure this \emph{alienation} component as a function of the distance between individuals $a(d(i,j))$. We assume that the {alienation function} $a: \Realo \rightarrow \Realo$  is a continuous and nondecreasing  function  with $a(0)=0$.   

Finally, the \textit{effective antagonism }of group $i$ towards group $j$ is measured by continuous  and strictly increasing function $T(I_i,a_{ij})$ of the identification of group $i$, $I_i=I(\pi_i)$, and the alienation between groups $i$ and $j$, $a_{ij}= a(d(i,j))$, satisfying $T(I_i, 0) = 0$. As in ER, we consider polarization measures $\mathcal{P}: \Nspace \rightarrow \Realo$  defined as the sum of effective antagonisms:
\begin{align}\label{eq:PolarizationGeneral}
	\mathcal{P}\network = \sum_{i=1}^{n}\sum_{j=1}^{n}\pi_i \pi_{j}T \Big(I(\pi_i), a\big(d_\graph(i,j)\big) \Big).
\end{align}
As we shall see, our axioms will pin down specific functional form for $T \Big(I(\pi_i), a\big(d_\graph(i,j)\big) \Big)$.

Our goal is to follow the axiomatization in ER as closely as possible, and modify it only when the network setting requires.  As it turns out, the first two axioms can be restated only with slight changes in the nomenclature. Axiom 3 needs an important adjustment.

 \begin{customax}{1}\label{ax:Axiom1}
	\textit{Data:} Network $\network$ with $n \geq 3$ nodes such that $\pi_x>\pi_y=\pi_z>0$ and $\pi_i=0$  $\forall i \in \nodes(\graph)\setminus\{x,y,z\}$.  Furthermore, $d_{\graph}(x,y) \leq d_{\graph}(x,z)$. \\
	\textit{Statement:} Fix $\pi_x$ and $d_{\graph}(x,y)$. There exists $\epsilon> 0$ and $\mu=\mu(\pi_x, d_{\graph}(x,y)) >0$ such that $d_{\graph}(y,z)< \epsilon$ and $\pi_y< \mu \pi_x$ imply that for any $\networkp \in \Nspace$ with $n\geq 2$ nodes such that $\mass'_{x'} =\mass_{x}$, $\mass'_{w'} = \mass_y + \mass_z$,  $d_{\graph'}(x', w') = \frac{1}{2}\left( d_{\graph}(x,y) + d_{\graph}(x,z)\right)$ and $\mass'_{i'}=0, \;i' \in \nodes(\graph') \setminus \{x', w'\}$, we have  $\mathcal{P}\networkp > \mathcal{P}\network$.
	
\end{customax}

The Axiom \ref{ax:Axiom1} captures the situations where two small groups join while keeping the (average) distance the same. 

\begin{figure}[H]
	\centering
	\subcaptionbox{ The move shown by arrows increase polarization.\label{fig:ax1}}
	{\includegraphics[scale=1.75]{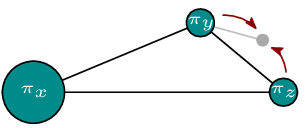}}
	\;\;\;\;\;        \;\;\;\;\;        \;\;\;\;\;
	\subcaptionbox{Axiom 1 applies when the new node is further away from the two small nodes as well.\label{fig:ax11}}
	{\includegraphics[scale=1.5]{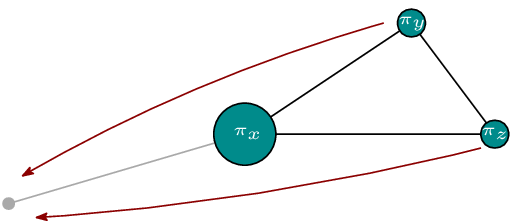}}
	\caption{\small Axiom 1.}\label{fig:Axiom1}
\end{figure}

Suppose in $\network$ there is a node with large group and there are two other smaller and equal-sized groups that are close to each other but further away from the larger group. Then network $\networkp$, in which smaller groups are joined at a node which is located in $\graph'$ at a distance equal to their average distance (in $\graph$) to the large group, is more polarized. Figure \ref{fig:Axiom1} illustrates such moves.\footnote{Note that $\graph$ and $\graph'$ do not have to be different and in our depictions we present axioms on the same graphs.}     Note that the distance of the fourth node to smaller nodes is not restricted in the axiom, allowing for moves such as the one depicted in panel (b) of Figure \ref{fig:Axiom1}.

 \begin{customax}{2}\label{ax:Axiom2}
	\textit{Data:} Network $\network$ with $n \geq 3$ nodes such that $\mass_x > \mass_z >0$, $\mass_y>0$, and $\mass_i = 0, \forall i \in \nodes(\graph)\setminus \{x,y,z\}$. Furthermore, $d_{\graph}(x,z) >d_{\graph}(x,y) > d_{\graph}(y,z)$.   \\
	\textit{Statement:} There exists $\epsilon >0$ such that for any network $\networkp$ with $(\mass'_{x'},  \mass'_{y'}, \mass'_{z'})=(\mass_{x},  \mass_{y}, \mass_{z})$, and $\mass'_{i'} = 0, \; i' \in \nodes(\graph') \setminus\{x', y',z'\}$  such that $d_\graph(x,z) = d_{\graph'}(x', z')$, $ 0< d_{\graph'}(x', y') -d_{\graph}(x,y) =d_{\graph}(y,z) - d_{\graph'}(y',z') < \epsilon $  we have $\mathcal{P}\networkp> \mathcal{P}\network$.
\end{customax}

 Axiom \ref{ax:Axiom2} applies when the group at one extreme is larger than the one at the other extreme and a third group is closer to the smaller of these two. When the group in-between moves slightly closer to the smaller group and away from the larger group, polarization increases.\footnote{Axiom \ref{ax:Axiom2} is rather weak as it applies to only those (small) moves such that an increase in distance from one extreme is equal to a decrease in the distance to the other extreme.} Note that the relative size of the group in the middle is not restricted.  Figure \ref{fig:Axiom2} illustrates such moves.

\begin{figure}[H]
	\centering
	\subcaptionbox{The move shown by the arrow increases polarization.\label{fig:ax2}}
	{\includegraphics[scale=1.5]{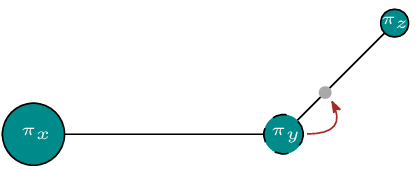}}
	\;\;\;\;\;        \;\;\;\;\;        \;\;\;\;\;
	\subcaptionbox{Axiom 2 applies in such a move as well, which is not possible on the real line.\label{fig:ax22}}
	{\includegraphics[scale=1.5]{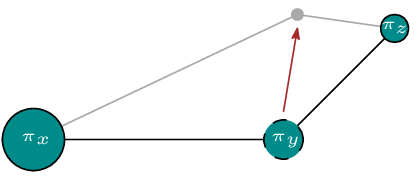}}
	\caption{\small Axiom 2.}\label{fig:Axiom2}
\end{figure}

 Note that the described move makes the middle group closer to the smaller group, but its new location does not have to be close to its original position, as seen in panel (b) of  Figure \ref{fig:Axiom2}. This kind of a move is not possible on the real line.

 \begin{customax}{3}\label{ax:Axiom3}
	\textit{Data:} Network $\network$ with $n \geq 3$ nodes such that $\mass_x >0,$ $\mass_y = \mass_z>0$ and $\mass_i = 0$ $\forall i \in \nodes(\graph)\setminus \{x,y,z\}$. Furthermore, $d_{\graph}(x,y) = d_{\graph}(x,z)=d>0$.   \\
	\textit{Statement:} For any $\Delta \in (0, \frac{\pi_x}{2}]$ and any network $\networkp$ with $(\mass'_{x'},  \mass'_{y'}, \mass'_{z'})=(\mass_{x}- 2 \Delta,  \mass_{y}+\Delta, \mass_{z}+\Delta)$, and $\mass'_{i'} = 0, \; i' \in \nodes(\graph') \setminus\{x', y',z'\}$  such that $d_{\graph'}(x',y') =d_{\graph'}(x',z')=d $  and $d_\graph(y,z) = d_{\graph'}(y', z')$,   we have $\mathcal{P}\networkp> \mathcal{P}\network$ whenever $d_\graph(y,z) =cd$, for any $c>1$.
	\end{customax}

Axiom \ref{ax:Axiom3} states that as long as the distance between two lateral groups is greater than the distance between the ``middle group" and a lateral group, a network in which  individuals from the group in the middle are reallocated to extreme points will exhibit higher polarization.  Note that the relative size of the group in node $x$ is not restricted. Furthermore,  in a network, $d_{\graph}(x,y) = d_{\graph}(x,z)=d$ implies only that $d_{\graph}(y,z) \leq 2d$, whereas on the real line $y\neq z$ and $|x-y|=|z-x|=d$ imply that $|z-y|=2d$. We will come back to this crucial point in Section \ref{sec:Discussion}.

\begin{figure}[th]
	\centering
	{\includegraphics[scale=2.2]{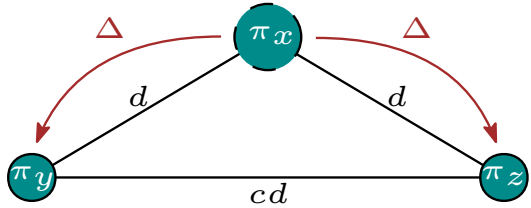}}
	\vspace{-2mm}
	\caption{\small Axiom 3 dictates that the dissolution of the middle group into two extreme nodes (with $c>1$) increases polarization.}
	\label{fig:ax3}
\end{figure}

We are now ready to state our central result, which identifies the measures of polarization in networks that satisfy Axioms 1--3. 

\begin{theorem}\label{thm:ThmMain}
A polarization measure $\mathcal{P}$ of the family defined in \eqref{eq:PolarizationGeneral} satisfies Axioms 1--3  and homotheticity if and only if 
\begin{align}\label{eq:Polarization}
\mathcal{P}\network = K \sum_{i \in \nodes(\graph)} \sum_{j \in \nodes(\graph)} \pi_i^2 \pi_j d_{\graph}(i,j), 
\end{align}
for some constant $K >0$. 
\end{theorem}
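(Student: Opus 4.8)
The plan is to prove both implications, with the ``only if'' direction carrying essentially all of the work. For the ``if'' direction I would simply substitute \eqref{eq:Polarization} into each requirement: homotheticity is immediate because \eqref{eq:Polarization} is homogeneous of degree $3$ in $\vecmass$ (so $\mathcal{P}(\graph,\lambda\vecmass)=\lambda^3\mathcal{P}\network$, which preserves the ranking), and each of Axioms~\ref{ax:Axiom1}--\ref{ax:Axiom3} reduces to checking the sign of $\mathcal{P}\networkp-\mathcal{P}\network$, an elementary inequality among the masses and geodesic distances of the (at most) three or four active nodes.

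For necessity, the organizing observation is that the primitives $I$, $a$, and $T$ in \eqref{eq:PolarizationGeneral} are network-independent, so it suffices to pin them down on any convenient subfamily. First I would restrict attention to \emph{line} (path) networks $1-2-\cdots-m$ with weighted edges. Such a network embeds isometrically into $\Real$ (place node $i$ at its cumulative edge length), so $d_\graph(i,j)=|y_i-y_j|$ and the restriction of \eqref{eq:PolarizationGeneral} is exactly the \cite{esteban1994} functional. Crucially, the restrictions of Axioms~\ref{ax:Axiom1}--\ref{ax:Axiom3} and of homotheticity to this subfamily coincide with the ER axioms; in particular, on a line $d_\graph(x,y)=d_\graph(x,z)=d$ forces $y,z$ to sit symmetrically about $x$, hence $d_\graph(y,z)=2d$, which is precisely the ER Axiom~3 configuration. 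Invoking the ER characterization then yields $a(d)=\beta d$, $T(I,a)=\gamma\, I\,a$, and $I(\pi)=\delta\pi^{\alpha}$ for positive constants and some $\alpha\in(0,\alpha^*]$, so that $\mathcal{P}\network=K\sum_{i}\sum_{j}\pi_i^{1+\alpha}\pi_j\,d_\graph(i,j)$ on all of $\Nspace$.

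It remains to force $\alpha=1$, and this is where the genuinely network-specific content of Axiom~\ref{ax:Axiom3} enters: networks admit $d_\graph(y,z)=cd$ for \emph{any} $c\in(1,2]$, whereas the line is locked to $c=2$. Taking the three-node configuration of Axiom~\ref{ax:Axiom3} with $\pi_x=P$, $\pi_y=\pi_z=p$ and letting $f(\Delta)$ denote the value of $\sum_i\sum_j\pi_i^{1+\alpha}\pi_j\,d_\graph(i,j)$ after shifting mass $\Delta$ from $x$ to each of $y,z$, the requirement $\mathcal{P}\networkp>\mathcal{P}\network$ for all arbitrarily small $\Delta>0$ forces $f'(0)\geq 0$. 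Since $f'(0)$ is increasing in $c$, the binding instance is $c\to 1^{+}$. Normalizing $p=1$ and writing $t=P$, one computes $f'(0)\big|_{c=1}=2\beta\gamma\delta^{?}\,h(t)$ up to a positive constant, where
\[
h(t)=t^{1+\alpha}-2(1+\alpha)t^{\alpha}+(1+\alpha)t+\alpha,
\]
and one checks $h(1)=0$ for every $\alpha$. Thus the equal-mass point $t=1$ must be a minimizer of the nonnegative function $h$, which forces $h'(1)=2(1+\alpha)(1-\alpha)=0$, i.e.\ $\alpha=1$. Substituting back into the reduced form gives \eqref{eq:Polarization}.

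The step I expect to be the main obstacle is the clean reduction in the second paragraph: one must verify that every configuration exploited in the ER proof is realizable as a line network and that the axiom restrictions match ER's \emph{exactly}, rather than merely implying them. If this matching is imperfect, the fallback is to re-run the ER functional-equation arguments directly on networks (homotheticity $\Rightarrow$ power identification and multiplicative separability of $T$; Axioms~\ref{ax:Axiom1}--\ref{ax:Axiom2} $\Rightarrow$ linearity of $a$ and bilinearity of $T$), which is the real technical bulk. A secondary point requiring care is upgrading the first-order condition $f'(0)\geq 0$ to the full-range statement $f(\Delta)>f(0)$ for all $\Delta\in(0,\tfrac{\pi_x}{2}]$; at $\alpha=1$ this is transparent because $f'(0)\big|_{c=1}\propto(P-p)^2$, but it must be confirmed globally for consistency of the characterization.
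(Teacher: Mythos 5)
Your proposal is correct and shares the paper's overall architecture: Axioms 1--2 plus homotheticity yield the ER form $K\sum_i\sum_j\pi_i^{1+\alpha}\pi_j d_{\graph}(i,j)$, and the network-specific content of Axiom \ref{ax:Axiom3} --- that $d_{\graph}(y,z)=cd$ is admissible for every $c\in(1,2]$ rather than being locked to $c=2$ --- is what collapses $\alpha$ to $1$ via the limit $c\to 1^{+}$. Two steps are executed differently. First, the paper does not pass through line networks: it reruns the ER functional-equation argument directly (linearity of $T$ in alienation from Axioms 1--2, the power form from homotheticity) and refers to ER for details. Your isometric-embedding reduction is a legitimate, arguably cleaner way to make ``analogous to ER'' precise; note that for necessity you only need the network axioms to \emph{imply} ER's axioms on line networks, not to coincide with them, so the obstacle you flag is not a real one. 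Second, and more substantively, your route to $\alpha=1$ differs from the paper's Lemma \ref{lem:AlphaNeq1}: the paper exhibits explicit intervals of mass ratios ($z\in(1/\alpha,1)$ for $\alpha>1$, $z\in(1,1/\alpha)$ for $\alpha<1$) on which $f(z,\alpha,1)>0$, whereas you observe that your $h$ --- which is, up to sign and the substitution $z=1/t$, the paper's $f(\cdot,\alpha,1)$ from \eqref{eq:Funf} --- vanishes at the equal-mass point $t=1$ for \emph{every} $\alpha$, so nonnegativity forces the first-order condition $h'(1)=2(1+\alpha)(1-\alpha)=0$. This is a more economical argument for the same fact, and your computations ($h(1)=0$, the value of $h'(1)$, and $h(t)=(t-1)^2$ at $\alpha=1$) all check out. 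The only thinness is on the sufficiency side: the paper verifies Axioms \ref{ax:Axiom1} and \ref{ax:Axiom2} by explicit computation, and the Axiom-1 inequality $(2^{\alpha}-1)(d_{\graph}(x,y)+d_{\graph}(x,z))p>2qd_{\graph}(y,z)$ holds only under that axiom's smallness hypotheses, so ``elementary inequality'' hides a little (routine) work; likewise the upgrade from a pointwise positive derivative to the global monotonicity demanded by Axiom \ref{ax:Axiom3} needs the one line you already anticipate.
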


\begin{proof}\text{ }\newline
	\textit{Sufficiency.} Without loss of generality  set $K=1$.  We prove that  Axiom \ref{ax:Axiom1} and Axiom \ref{ax:Axiom2} are satisfied for
	\begin{align}\label{eq:PolarizationAlpha}
	\mathcal{P}_\alpha\network = K\sum_{i \in \nodes(\graph)}\sum_{j \in \nodes(\graph)}\pi_i^{1+\alpha} \pi_{j}d_{\graph}(i,j),
	\end{align}
	whenever $\alpha >0$.  Clearly,  \eqref{eq:PolarizationAlpha} becomes \eqref{eq:Polarization} when $\alpha=1$. Establishing this claim  for $\alpha \neq 1$ is important for the proof of Theorem \ref{thm:ThmGen}.\\
	
	\textit{Axiom \ref{ax:Axiom1}.}
	Let $\mass_x =p$ and $\mass_y = \mass_z=q$.  Using $\distance_{\graph'}(x',w') = \frac{\distance_{\graph}(x,y)+\distance_{\graph}(x,z)}{2}$ we get that
	\begin{align*}
	\Pol_{\alpha}\network = p^{1+\alpha}q \distance_{\graph}(x,y) + p^{1+\alpha}q \distance_{\graph}(x,z)  + 2 q^{1+\alpha}q \distance_{\graph}(y,z) + q^{1+\alpha}p \distance_{\graph}(x,y) + q^{1+\alpha}p \distance_{\graph}(x,z),
	\end{align*}
	while
	\begin{align*}
	\Pol_{\alpha}\networkp = 
	  p^{1+\alpha}(2q)\frac{\distance_{\graph}(x,y)+\distance_{\graph}(x,z)}{2} + (2q)^{1+\alpha} p \frac{\distance_{\graph}(x,y)+\distance_{\graph}(x,z)}{2}.
	\end{align*}
	After simplification we get:
	\begin{align*}
	&\Pol_{\alpha}\network = (\distance_{\graph}(x,y)+\distance_{\graph}(x,z))(p^{1+\alpha} q + q^{1+\alpha}p)+2 q^{2+\alpha}\distance_{\graph}(y,z), \text{ and }\\
	& \Pol_{\alpha}\networkp= (\distance_{\graph}(x,y)+\distance_{\graph}(x,z))(p^{1+\alpha} q + q^{1+\alpha}p) + (2^{\alpha}-1)(\distance_{\graph}(x,y)+\distance_{\graph}(x,z))q^{1+\alpha}p,
	\end{align*} 
	which implies 
	$\Pol_{\alpha}(\graph',\vecmass') > \Pol_{\alpha}\network$ whenever 
	%
	$(2^{\alpha}-1)(\distance_{\graph}(x,y)+\distance_{\graph}(x,z))p > 2 q \distance_{\graph}(y,z).$
	When $\distance(y,z)$ is small enough ($\distance(y,z)<\epsilon$) this inequality will hold for any $\alpha >0$ and $q$ small enough relative to $p$ ($q < \mu p$), {as required by Axiom \ref{ax:Axiom1}}.\\
	
	\textit{Axiom \ref{ax:Axiom2}.} Let $\mass_x =p$, $\mass_y = q$, and $\mass_z=r$.
	Subtracting we get:
	\begin{align*}
	\Pol_{\alpha}\network-\Pol_{\alpha}\networkp =&q^{1+\alpha}[p(d_{\graph'}(x',y')-d_{\graph}(x,y))+r(d_{\graph'}(y',z')-d_{\graph}(y,z))]+\\
	&q[p^{1+\alpha}(d_{\graph'}(x',y')-d_{\graph}(x,y))+r^{1+\alpha}(d_{\graph'}(y',z')-d_{\graph}(y,z))],
	\end{align*}
	which is positive for any $\alpha >0$ whenever $r<p$,  since $d_{\graph'}(x',y')-d_{\graph}(x,y) = d_{\graph}(y,z) - d_{\graph'}(y',z')$, {and therefore $\Pol_{\alpha}$ satisfies Axiom \ref{ax:Axiom2}.}\\
	
	\textit{Axiom 3.} 
	{We now show that $\Pol$ satisfies Axiom \ref{ax:Axiom3}. To this end let} $d_{\graph}(x,y) = d_{\graph}(x,z) = d $, and let $d_{\graph}(y,z) = cd$ with $c > 1$. Furthermore, let $\mass_x =p+2\Delta$ and $\mass_y =\mass_z=q-\Delta$.  
	We can write:
	\begin{align}\label{eq:FuncAxiom3}
	\Pol_\alpha(\network; \Delta) = 2cd \big(	(q-\Delta)^{2+\alpha}\big)+2d\left[(p+2\Delta)(q-\Delta)\big((p+2\Delta)^\alpha+(q-\Delta)^\alpha\big) \right].                 
	\end{align}
	To prove that $\Pol$ satisfies Axiom \ref{ax:Axiom3} it is sufficient to show that  $\left. \frac{\partial \Pol_\alpha(\network, \Delta) }{\partial \Delta}\right\vert_{\Delta=0, \alpha=1} < 0$ for every $(p,q) \gg 0$, except for at most one ratio $p/q$.
	Differentiating \eqref{eq:FuncAxiom3}  at $\Delta = 0$ and dividing by $2d$ $(\geq 0)$ we get:
	\begin{align*}
	\left. \frac{\partial \Pol_\alpha}{\partial\Delta}\right\vert_{\Delta=0}<0 \iff -p^\alpha\big(p-2(1+\alpha)q\big)+q^\alpha\big(-(1+\alpha)p+2q-(2+\alpha)cq\big)<0.
	\end{align*}
	Dividing by $p^{1+\alpha}>0$ and using notation $z = q/p$ we get:
	\begin{align*}
	\left. \frac{\partial \Pol_\alpha}{\partial\Delta}\right\vert_{\Delta=0}<0 %
	\iff f(z, \alpha, c) <0,
	\end{align*}
	where
	$f:\Realo^2\times[1,2] \rightarrow \Real $ is defined with:
	\begin{align}\label{eq:Funf}
	f(z, \alpha, c) = (1+\alpha) \left [ z- \frac{z^{\alpha}}{2} + \frac{z^{1+\alpha}}{2} \frac{(2-c(2+\alpha))}{1+\alpha} \right] - \frac{1}{2}.
	\end{align}

	 Proving that $\left. \frac{\partial \Pol_\alpha}{\partial\Delta}\right\vert_{\Delta=0, \alpha=1}<0$ for any $c \in (1, 2]$ (except for at most one ratio $p/q$) is equivalent to proving that $f(z,1,c)<0$ for any $c \in (1, 2]$ (except for at most one point $z$). One can easily verify that $f(z,1,c)<0$ ($f(z,1,c)$ is a quadratic function in $z$) for any $c \in (1, 2]$, therefore $\Pol$ satisfies Axiom \ref{ax:Axiom3} as well.\footnote{When $c=1$ then $f(z,1,1) \leq 0$ where the equality holds only at point $z=1$.}

	\textit{Necessity.} The proof is analogous to the proof of Theorem 1 in ER.  We describe it briefly, and refer the reader to ER for detailed derivation. Axioms 1--2 imply that function  $T$ is linear in its second argument, thus $\theta(\mass, \delta) \equiv  T(I(\mass), a(d_{\graph}(i,j)))$ can be written as  $\theta(\mass, \delta)= \phi(\pi)\delta$. Furthermore, Axiom \ref{ax:Axiom1} implies that $\phi(\cdot)$ is an increasing function.\footnote{See \cite{kawada2018characterization} for a solution to a technical problem arising from the original formulation of Axiom 1 in ER.}  Homotheticity implies that $\phi(\pi)= K \pi^{\alpha}$ for some constants $(K,\pi) \gg 0$. 
	
	Finally, Axiom \ref{ax:Axiom3} implies that $f(z,\alpha,c) \leq 0$,  with equality holding at most at one point $z$. In the first part of the proof we established that when $c>1$ and $\alpha=1$, $f(z,\alpha,c) <0 $ for all $z>0$. Lemma \ref{lem:AlphaNeq1}  implies that for any $\alpha \neq 1$  we can find $c>0$ such that $f(z,\alpha,c)>0$, which concludes the proof.
\end{proof}

A few comments are in order. First, recall that ER characterize measures of polarization on the real line as
\begin{align}\label{eq:ERPolarization}
P^{ER}(\vec{\pi}) = K \sum_{i=1}^{n}\sum_{j=1}^n \pi_i^{1 + \alpha} \pi_j |i-j|, 
\end{align}
with $K>0$ and  $\alpha \in (0, \alpha^*]$, with $\alpha^* \simeq 1.6$. The main difference between \eqref{eq:Polarization}  and \eqref{eq:ERPolarization}  is that  the index in \eqref{eq:Polarization} implies $\alpha=1$. The reason for this difference lies in the nature of the distances, discussed in relation with Axiom \ref{ax:Axiom3}. It requires that a move from a middle mass ($\mass_x$)  to the lateral points ($\mass_y$ and $\mass_z$) equidistant from the middle  increases  polarization whenever they are individually further away from each other than they are to the midpoint. Contrary to the real line, in $\network$, $d_{\graph}(y,z)$ is not determined by $d_{\graph}(x,y) = d_{\graph}(x,z)$, and in fact it can  very well happen that $d_{\graph}(y,z) < d_{\graph}(x,y)$ even when $d_{\graph}(x,y) = d_{\graph}(x,z)$. We revisit this important matter in Section \ref{sec:Ax3} below. Note that Axioms \ref{ax:Axiom1} and \ref{ax:Axiom2} also require adaptation for the network setup, but these adaptations are minor and  do not have important implications on the form of the characterized family of measures.

Intuitively, a society is polarized if it can be grouped in a small number of homogeneous groups of similar sizes that are very different from each-other and polarization is often conceptualized to capture the level of bipolarity (or bimodality).\footnote{See \cite{foster2010polarization} for a discussion on bipolarity of income distributions and \cite{dimaggio1996have} for a more general discussion on bimodality, among others.} Thus, it is desirable that a polarization measure is maximized at a bipolar distribution. A bipolar network is one where the population is split equally into two extreme (most distant) nodes. The maximal distance between two nodes  in graph $\graph$ is called the \textit{diameter} of $\graph$ and is denoted by $d(\graph)$.\footnote{More formally, $d(\graph) = \max_{i,j  \in N(\graph)} d_{\graph}(i,j)$. See \cite{vega2007complex} or \cite{jackson2008}. } For any graph $\graph$ let $\vecmass^{B}(\graph)$ denote the distribution in which the population is split equally across two nodes at distance $d(\graph)$.
Our next result shows that  $(\graph, \mass^{B}(\graph))$ is more polarized than any other network $\network$ under any measure within our characterization.

 \begin{proposition} \label{prop:MaxElement}
 $\mathcal{P}(\graph, \vecmass^B(\graph))> \mathcal{P}(\graph, \vecmass)$ for any $(\graph,\vecmass)$ with  $\vecmass\neq \vecmass^B(\graph)$ and any measure $\mathcal P$ defined in \eqref{eq:Polarization}. 
 \end{proposition}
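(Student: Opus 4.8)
The plan is to exploit that the measure in \eqref{eq:Polarization} is homogeneous of degree three in $\vecmass$, so by homotheticity I may normalize the total mass to $\sum_i \pi_i = 1$ and prove the statement on the unit simplex, the general case following by scaling. Writing $D := d(\graph)$ for the diameter and using $d_{\graph}(i,i)=0$ to drop the diagonal, I would decouple the distances from the masses through the crude but decisive bound $d_{\graph}(i,j)\le D$:
\[
\mathcal{P}(\graph,\vecmass) = K\sum_{i\ne j}\pi_i^2\pi_j\, d_{\graph}(i,j) \;\le\; KD\sum_{i\ne j}\pi_i^2\pi_j \;=\; KD\sum_{i}\pi_i^2(1-\pi_i),
\]
where the last equality uses $\sum_{j\ne i}\pi_j = 1-\pi_i$. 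This reduces matters to a pure simplex optimization of $g(\vecmass):=\sum_i \pi_i^2(1-\pi_i)$ that no longer sees the graph structure.

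The crux is to show $g(\vecmass)\le \tfrac14$, with equality exactly when $\vecmass$ places mass $\tfrac12$ on two nodes. I would prove this by the elementary factorization $\pi_i^2(1-\pi_i)=\pi_i\cdot\big(\pi_i(1-\pi_i)\big)$ combined with $\pi_i(1-\pi_i)\le \tfrac14$ for $\pi_i\in[0,1]$ (valid since $0\le\pi_i\le1$ on the simplex), which gives
\[
g(\vecmass)=\sum_i \pi_i\,\big(\pi_i(1-\pi_i)\big)\le \tfrac14\sum_i \pi_i = \tfrac14 .
\]
Equality forces every node with $\pi_i>0$ to satisfy $\pi_i(1-\pi_i)=\tfrac14$, i.e. $\pi_i=\tfrac12$; since the masses sum to one, there are then exactly two such nodes. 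Evaluating the measure at the bipolar distribution gives $\mathcal{P}(\graph,\vecmass^B(\graph)) = K\big((\tfrac12)^2\tfrac12 D + (\tfrac12)^2\tfrac12 D\big) = \tfrac{KD}{4}$, which matches the upper bound.

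To finish, I would assemble the equality analysis to obtain the strict inequality. Chaining the two bounds yields $\mathcal{P}(\graph,\vecmass)\le \tfrac{KD}{4} = \mathcal{P}(\graph,\vecmass^B(\graph))$. Equality throughout requires both (i) $g(\vecmass)=\tfrac14$, so that $\vecmass$ is supported on two nodes $a,b$ each of mass $\tfrac12$, and (ii) equality in $d_{\graph}(i,j)\le D$ on the support, i.e. $d_{\graph}(a,b)=D$. Together these say precisely that $\vecmass$ is a bipolar distribution, so any $\vecmass\neq\vecmass^B(\graph)$ satisfies the strict inequality.

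The main subtlety lies in the simplex step: because the map $x\mapsto x^2-x^3$ is convex near $0$ and concave near $1$, the objective $g$ is not concave on the simplex, so it is not a priori clear that spreading mass over many nodes cannot beat the two-atom configuration. The factorization trick is what sidesteps a messy KKT or case analysis, delivering both the bound $g\le\tfrac14$ and its tightness characterization in one stroke; the remaining care is in doing the equality bookkeeping for (i) and (ii) jointly. One caveat worth flagging is that if the graph realizes its diameter on several distinct pairs of nodes, the maximizer is not unique, so the statement is properly read as identifying the bipolar distributions as exactly the maximizers, with the strict inequality holding against every non-bipolar $\vecmass$.
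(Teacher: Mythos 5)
Your proof is correct, and it takes a genuinely different route from the paper's. Both arguments share the first move of replacing every distance by the diameter $D=d(\graph)$, but from there the paper proceeds constructively: it shows that merging the two smallest of $k\geq 4$ atoms on the complete graph with all link weights $D$ strictly increases the bound, iterates down to three atoms, and then finishes with a case analysis over supports of size two and three. You instead normalize to the unit simplex and prove the sharp global bound $\sum_i \pi_i^2(1-\pi_i)\leq \tfrac14$ in one stroke via the factorization $\pi_i^2(1-\pi_i)=\pi_i\cdot\big(\pi_i(1-\pi_i)\big)\leq \tfrac{\pi_i}{4}$, which simultaneously delivers the value $\tfrac{KD}{4}$ at the bipolar distribution and the full equality characterization (support of exactly two atoms of mass $\tfrac12$ at distance $D$). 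This is shorter, avoids the merging induction and the case analysis entirely, and is arguably more informative since it identifies all maximizers at once; the paper's approach, by contrast, makes visible the monotone ``merging'' dynamic that echoes Axiom 1. Your closing caveat about non-uniqueness of the maximizer when the diameter is attained by several pairs is well taken --- the paper's statement (and its proof) has the same implicit reading, so this is a presentational point rather than a gap.
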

 
  \begin{proof}\textit{ }\newline	
 	We first prove that	for any network $\network$ such that $\vecmass$ has at lest $4$ nonzero mass points,  there exists a $3$ node network $(\graph^*,\vecmass^*)$ {with} $g^{*}_{ij} = d(\graph)$ for $i, j \in \nodes(\graph^*)$, and  $\sum_{i=1}^3 \pi_i^* =\sum_{i \in \nodes{\graph}} \pi_i$ such that $\Pol\network < \Pol(\graph^*, \vecmass^*).$ 
 	
 	The proof is constructive. Assume, without loss of generality, that in $\network$, we have $\pi_1 \geq \pi_2 \geq \dots \geq \pi_n$ with $\pi_k>0$ and $\pi_{k+1}=0$ for some $k\geq 4$. Fixing $K=1$ in \eqref{eq:Polarization} (without loss of generality) we get:
 	\begin{align}\label{ineq:PolMaxStep1}
 	\begin{split}
 	\Pol \network \leq& d(\graph) \sum_{i=1}^{k} \sum_{j=1}^{k}\pi_i^2 \pi_j d_{\graph}(i,j)
 	\\
 	=& d(\graph) \left[\sum_{i=1}^{k-2} \sum_{\substack{j=1\\ j \neq i}}^{k-2}\pi_i^{2}\pi_j + \pi_{k-1}^2\sum_{\substack{j=1\\j \neq k-1 }}^{k} \pi_j + \pi_{k}^2\sum_{\substack{j=1\\j \neq k }}^{k} \pi_j+\pi_{k-1}\sum_{j=1}^{k-2} \pi_j^2 + \pi_{k}\sum_{j=1}^{k-2} \pi_j^2 \right].
 	\end{split}
 	\end{align}		
 	
 	Denote the right hand side expression in \eqref{ineq:PolMaxStep1} with $\Pol(\graph',\vecmass')$, where $g'_{ij}=d(\graph)$ for all $i,j\in N(\graph)$ and $\pi'_i=\pi_i$ for all $i\in N(\graph)$. Consider now a change in $(\graph',\vecmass')$ such that masses in nodes $k$ and $k-1$ are merged at one of these nodes to obtain $(\graph'',\vecmass'')$. Simple algebra gives: 
 	\begin{align*}
 	\Pol(\graph'',\vecmass'') = d(\graph)\left[\sum_{i=1}^{k-2} \sum_{\substack{j=1\\ j \neq i}}^{k-2}\pi_i^{2}\pi_j + (\pi_{k-1} + \pi_{k})^2 \sum_{j=1}^{k-2} \pi_{j} + (\pi_{k-1} + \pi_{k}) \sum_{j=1}^{k-2} \pi_{j}^2 \right].
 	\end{align*}	
 	Subtracting $\Pol(\graph',\vecmass')$ we get:
 	\begin{align}\label{ineq:PolMaxStep2}
 	\begin{split}
 	\Pol(\graph'',\vecmass'')-\Pol(\graph',\vecmass')
 	=d(\graph)\pi_{k-1}\pi_{k} \left[ 2 \sum_{j=1}^{k-2}\pi_j - (\pi_{k-1} + \pi_{k})\right] >0,
 	\end{split}
 	\end{align}	
 	where the inequality follows from the choice of $k$ and $k-1$ and the fact that $k \geq 4$. 
 	Thus, for any network $\network$  with $|N(\graph)|\geq 4$ have $\Pol\network < \Pol(\graph'',\vecmass'').$
 	
 	If $k=4$, $(\graph^*, \vecmass^*) = (\graph'',\vecmass'')$. If $k>4$, the above described procedure of joining the masses in nodes $k-1$ and $k-2$ can be iteratively applied.

 	To conclude the proof of the proposition, consider 3 different cases for $\network$:
 	\begin{itemize}
 		\item [(i)]  $|\{i\in\nodes(\graph):\pi_i>0\}|=2$. Clearly $\Pol(\graph,\vecmass^B(\graph)) =
 		d(\graph)2\left(\frac{\pi_i+\pi_j}{2}\right)^3>d_{\graph}(i,j)\big(\pi_i\pi_j(\pi_{i}+\pi_j)\big)=
 		\Pol(\graph,\vecmass)$ for any $\network \neq (\graph, \vecmass^{B})$.
 	
 		\item[(ii)]  $|\{i\in\nodes(\graph):\pi_i>0\}|=3.$  We consider two cases.
 		\begin{itemize}
 			\item[(a)] If $\pi_i = \pi_j =\pi_k >0$,  and $\pi_{\ell}=0$ for all $\ell \in \nodes(\graph)\setminus \{i,j,k\}$. One can directly check that in this case $ \Pol(\graph, \vecmass^{B}(\graph))> \Pol\network $. 
 			\item[(b)] $ \pi_i,\pi_j ,\pi_k >0 $ and $\pi_{\ell}=0$ for all $\ell \in \nodes(\graph)\setminus \{i,j,k\}$ and not all nonzero masses are equal. Suppose, without loss of generality, that  $\pi_{i} \geq \max\{\pi_j, \pi_k\}.$ Consider network $(\graph', \vecmass')$ such that $N(\graph') = N(\graph)$, $g'_{ij} = d(\graph) \; \forall i,j \in N(\graph')$,  with  $\pi'_i = \pi_i$, $\pi_j'= \pi_j + \pi_k$, and $\pi'_{\ell}=0, \; \ell \in \nodes(\graph')\setminus \{i,j\}$.
 			It can be directly checked that $2\pi_i > \pi_j + \pi_k$ implies  $\Pol(\graph', \vecmass') > \Pol(\graph, \vecmass)$.  The claim follows from the fact that $\Pol(\graph', \vecmass^{B}(\graph')) \geq \Pol(\graph', \vecmass')$.
 		\end{itemize}
 
 		
 		\item [(iii)]  $|\{i\in\nodes(\graph):\pi_i>0\}|\geq4.$ The claim follows  from the first part of the proof and (ii).
 	\end{itemize} 	
 	\vspace{-25pt}	
 \end{proof}
%
\vspace*{-5pt}
\section{Discussion} \label{sec:Discussion}

In this section, we first discuss some important properties of the measures we characterize in relation to the structure of networks. Then we show  how our work is related to previous papers in the literature. We conclude this section with a discussion on how the weakening of the Axiom 3 can relate our characterization to the one in ER, by exactly describing the relationship between the importance of identification ($\alpha$) and the network structure. 

\vspace*{-5pt}

\subsection{Network structure and polarization}
We first want to emphasize that the structure of a graph $\graph$ determines the distance between any two nodes in $N(\graph)$. A change in the structure of a graph $\graph$, \emph{e.g.,} deleting a link, may affect the measured levels of polarization, even if $\vecmass$ stays the same. Although empty (zero-weight) nodes do not directly contribute to the level of polarization, they may  be important ``indirectly" if, for instance, they are located on the shortest path between some non-empty nodes. Figure \ref{fig:Disc} illustrates this point.

\begin{figure}[H]
	\centering
	\subcaptionbox{$(\graph, \vecmass)$.\label{fig:Desc1}}
	{\includegraphics[scale=1.4]{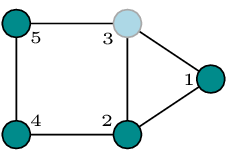}}
	\;\;\;\;\;        \;\;\;\;\;        
	\subcaptionbox{$(\graph', \vecmass)$.\label{fig:Desc2}}
	{\includegraphics[scale=1.4]{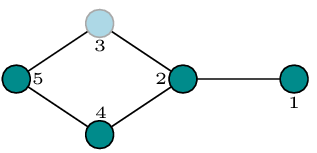}}
	\;\;\;\;\;        \;\;\;\;\;       
	\subcaptionbox{$(\graph'', \vecmass'')$.\label{fig:Desc3}}          
	{\includegraphics[scale=1.4]{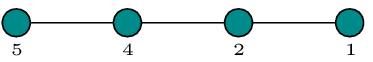}}
	\caption{\small \small Three networks where each link has weight $1$ and each node except the node 3 has weight $1$ ($\pi_3=0$). $(\graph', \vecmass)$ is obtained from $(\graph, \vecmass)$ by deleting the link $g_{13}$. $(\graph'', \vecmass'')$ is obtained from $(\graph, \vecmass)$ by deleting the node 3. Note that $\pi_i=\pi''_i$ for all $i\in N(\graph'')$. We have $\Pol\network < \Pol (\graph', \vecmass) = \Pol(\graph'', \vecmass'')$.}
	\label{fig:Disc}
\end{figure}

Next, we want to note that given Proposition \ref{prop:MaxElement}, we have that $d(\graph)>d(\graph')$  implies $\mathcal{P}(\graph, \vecmass^{B}(\graph))> \mathcal{P}(\graph', \vecmass^{B}(\graph'))$. That is, comparing two bipolar networks, the larger the diameter, the higher the polarization. 

Finally, in the special case when $\vec{\pi} =\vec{1}$, $\mathcal{P}\network$ is proportional to the \textit{average shortest path} in the graph $\graph$.\footnote{The average shortest path in a network is closely related to the ``{closeness}" measure  \citep{vega2007complex,jackson2008}.} Thus, the closer the individuals are, on average, the less polarized the network is. 

\vspace*{-5pt}
\subsection{Relation to previous results}
We argue that the settings considered in ER and MRQ are special cases of our setting, and hence our results can be seen as generalizations of theirs. To start with, recall that ER consider distributions on the real line with a {finite} support (p. 830). It is straightforward to note that any distribution as such can be described as a network. To see this, let $\vecmass$ be a distribution with a set of $N$ mass points. Consider graph $\graph$ with $N$ nodes such that $g_{ij} = |i-j|$ for any two adjacent mass points $i$ and $j$ on the real line, and $ij \notin \graph$ otherwise.
\footnote{This is not the unique way to represent a discrete distribution with $n$ mass points as a network. 
However, any consistent  representation that relies on the same metric will lead to a network with the same polarization.} Indeed, we can represent any distribution on an $m-$dimensional space with finite number of mass point as a network by simply setting  $g_{\vec{ij}} =\norm{\vec{i}-\vec{j}}$, where $\norm{\cdot}$ can be any norm.

In the setting considered in MRQ the distance between \textit{any two} different groups equals to 1. 
It is immediate to note that this setting can be described {by the network $(\graph,\vecmass)$  where $\graph$ is \textit{the complete graph} ($g_{ij} = 1$ for any pair of different nodes $i, j \in N(\graph)$)}. MRQ proposes a different set of axioms.\footnote{The logical dependence between our axioms and the ones in MRQ is an interesting question that is left for future research.} Our setting is more general than the one in MRQ in that it allows considering graphs that are not \textit{complete}, with links that have \textit{different weights}. Moreover, some studies, including \cite{desmet2009linguistic} and \cite{dower2017colonial},  empirically  contrast the ER measure (with $\alpha=1$)  with the MRQ measure in situations where distances between groups are observed and non additive (\emph{i.e.,} ethnolinguistic distance). This is a setting that can be described using our model but is not within the original ER setup.  In \cite{desmet2009linguistic} and \cite{dower2017colonial}  the ER measure is both economically and statistically  significant (at conventional levels) when examining the effect of polarization on redistribution and conflict respectively. At the same time, the MRQ measure, which imposes that each group is at the same distance from any other group, is not significant in explaining the same outcomes. Therefore, accounting for distances between groups, or at least their proxies, is empirically important as well. Our paper is the first to provide formal justification to use the ER polarization measure with $\alpha=1$ for the measurement of polarization in such contexts. 
 
\vspace*{-5pt}
\subsection{Axiom 3 and its role in the network setting}\label{sec:Ax3}

 Axiom \ref{ax:Axiom3} requires that the described change in $\network$ leads to an increase in polarization only when the distance between lateral nodes is at least as large as the distance between the center node and lateral nodes. We now discuss less demanding versions of Axiom \ref{ax:Axiom3}, labeled systematically as Axiom \ref{ax:Axiom3Prime}, in which we require that the scenario  in  Axiom \ref{ax:Axiom3} leads to an increase in polarization only if the lateral nodes are ``{far enough}" (quantified by the scalar $c$) from each other. This is of interest also because some settings imply a specific network structure in which there is a clear lower bound for the distance between two lateral nodes contemplated in Axiom \ref{ax:Axiom3}. For instance, as we saw before, any discrete distribution on the real line can be represented with a line network. On any line network, the distance between lateral nodes is the double of the distance between the middle node and a lateral node, as it is on the real line.
 
 \begin{customax}{3$(c)$}\label{ax:Axiom3Prime}
	\textit{Data:} Network $\network$ with $n \geq 3$ nodes, $\mass_x > \mass_y = \mass_z>0$ and $\mass_i = 0$ for all $i \in \nodes(\graph)\setminus \{x,y,z\}$. Furthermore, $d_{\graph}(x,y) = d_{\graph}(x,z)=d>0$.   \\
	\textit{Statement:} Fix $c \in (1,2]$. For any $\epsilon \in (0, \pi_x]$ and any network $\networkp$ with $(\mass'_{x'},  \mass'_{y'}, \mass'_{z'})=(\mass_{x}-\epsilon,  \mass_{y}+\frac{\epsilon}{2}, \mass_{z}+\frac{\epsilon}{2})$, and $\mass'_{i'} = 0, \; i' \in \nodes(\graph') \setminus\{x', y',z'\}$  such that $d_{\graph'}(x',y') =d_{\graph'}(x',z')=d $  and $d_\graph(y,z) = d_{\graph'}(y', z')$,   we have $\mathcal{P}\networkp> \mathcal{P}\network$ whenever $d_\graph(y,z) \geq  d c$.
\end{customax}

\begin{figure}[H]
       \centering
         {\includegraphics[scale=2.2]{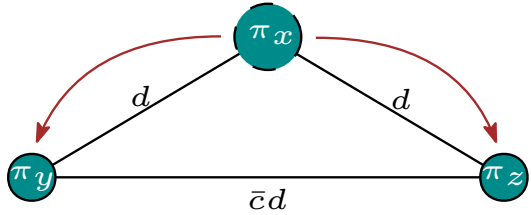}}
           \vspace{-2mm}
       \caption{\small Axiom \ref{ax:Axiom3Prime} {requires} that the move shown by the arrows should increase polarization if $\bar c \geq c$, for a fixed $c \in (1,2]$ .}
       \label{fig:ax33}
     \end{figure}  
     
When $c =1$, we have the  same statement  as in Axiom \ref{ax:Axiom3}, while for $c =2$ we have essentially the Axiom 3 in ER.  The particular  value of $c$ has important implications on the resulting measure of polarization, as stated in Theorem \ref{thm:ThmGen}.

\begin{theorem}\label{thm:ThmGen}
 Fix $c \in (1,2]$. There exists an interval $[\underline\alpha (c), \bar{\alpha}(c)]\subseteq(0,\alpha^*]$ with $\alpha^*\simeq 1.6$ such that the polarization measure $\mathcal{P}$ of the family defined in \eqref{eq:PolarizationGeneral} satisfies Axioms 1, 2, and \ref{ax:Axiom3Prime}  and homotheticity if and only if 
	\begin{align}
	\mathcal{P}_\alpha\network = K\sum_{i \in \nodes(\graph)}\sum_{j \in \nodes(\graph)}\pi_i^{1+\alpha} \pi_{j}d_{\graph}(i,j)
	\end{align}
	 for some constant $K>0$ whenever $\alpha\in[\underline\alpha(c), \bar{\alpha}(c)]$.  Furthermore,    $c_2 >c_1 \implies [\underline\alpha (c_1), \bar{\alpha}(c_1)] \subset [\underline\alpha (c_2), \bar{\alpha}(c_2)] $. 
\end{theorem}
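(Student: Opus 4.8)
The plan is to build directly on the proof of Theorem~\ref{thm:ThmMain}. Axioms \ref{ax:Axiom1} and \ref{ax:Axiom2}, together with homotheticity, are unchanged, so exactly as there they force the functional form $\Pol_\alpha$ of \eqref{eq:PolarizationAlpha} for some $\alpha>0$ and $K>0$, and the sufficiency computations already show that $\Pol_\alpha$ satisfies Axioms \ref{ax:Axiom1}--\ref{ax:Axiom2} for \emph{every} $\alpha>0$. Thus the whole content of the theorem is the equivalence between Axiom \ref{ax:Axiom3Prime} and membership of $\alpha$ in a suitable interval, and I would isolate this first. Writing $d_\graph(y,z)=\bar c\,d$, the network geometry forces $\bar c\in[c,2]$, and --- repeating the derivative-at-$\Delta=0$ reduction from the proof of Theorem~\ref{thm:ThmMain}, which is valid for all mass ratios $z=q/p$ simultaneously and hence propagates along the entire finite move --- Axiom \ref{ax:Axiom3Prime} holds if and only if $f(z,\alpha,\bar c)\le 0$ for all $z>0$ and all $\bar c\in[c,2]$, with equality at most at one $z$. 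Since the $f$ of \eqref{eq:Funf} satisfies $\partial_{\bar c} f(z,\alpha,\bar c)=-\tfrac{1}{2}(2+\alpha)z^{1+\alpha}<0$ for $z>0$, the binding case is $\bar c=c$, so the axiom reduces to the single condition $f(z,\alpha,c)\le 0$ for all $z>0$ (equality at most once).

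It remains to analyse, for fixed $c$, the set $A(c)=\{\alpha>0:\ \sup_{z>0} f(z,\alpha,c)\le 0\}$. The key device is to read the condition in $c$ rather than in $\alpha$: because $\partial_{\bar c} f<0$, for each $\alpha$ the map $c\mapsto \sup_{z} f(z,\alpha,c)$ is strictly decreasing, so there is a well-defined threshold $c^*(\alpha)=\inf\{c:\ \sup_z f(z,\alpha,c)\le 0\}$ with $f(\cdot,\alpha,c)\le 0 \iff c\ge c^*(\alpha)$. The sufficiency computation of Theorem~\ref{thm:ThmMain} gives $f(z,1,c)<0$ for all $z$ when $c>1$, and $f(z,1,1)\le 0$ with equality only at $z=1$, so $c^*(1)=1$; Lemma~\ref{lem:AlphaNeq1} supplies, for every $\alpha\neq 1$, an admissible $c$ with $f>0$, hence $c^*(\alpha)>1$. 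I would then show that $c^*$ is continuous and unimodal with its unique minimum at $\alpha=1$, exploiting the explicit form of $f$: its frontier is governed by the tangency system $f(z,\alpha,c)=0$, $\partial_z f(z,\alpha,c)=0$, and eliminating $z$ identifies the two roots of $c^*(\alpha)=c$. Granting unimodality, $A(c)=\{\alpha:\ c^*(\alpha)\le c\}=[\underline\alpha(c),\bar\alpha(c)]$ is a closed interval whose endpoints are precisely these two roots, which establishes both the characterization and the interval claim. The main obstacle is exactly this unimodality (equivalently, connectedness of the sublevel set $\{\alpha:\ \sup_z f\le 0\}$): it is the only place where the nonlinear structure of $f$ in $\alpha$ must be controlled, rather than the benign linear dependence on $c$ used elsewhere.

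For the remaining assertions I would argue monotonically in $c$. Nesting is immediate from $A(c)=\{\alpha:\ c^*(\alpha)\le c\}$, since $c_1<c_2$ gives $A(c_1)\subseteq A(c_2)$. Strictness follows from $\partial_{\bar c} f<0$: at $\alpha=\bar\alpha(c_1)$ the tangency $f(z^*,\bar\alpha(c_1),c_1)=0$ holds at an interior $z^*>0$, so $f(z,\bar\alpha(c_1),c_2)<f(z,\bar\alpha(c_1),c_1)\le 0$ for all $z>0$, and since the $c_1$-supremum is attained at an interior $z^*$ (where $\partial_{\bar c}f$ is bounded away from $0$) one gets $\sup_z f(z,\bar\alpha(c_1),c_2)<0$. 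Hence $\bar\alpha(c_1)$ lies in the interior of $A(c_2)$, so $\bar\alpha(c_1)<\bar\alpha(c_2)$; the symmetric argument gives $\underline\alpha(c_2)<\underline\alpha(c_1)$, yielding the strict inclusion $[\underline\alpha(c_1),\bar\alpha(c_1)]\subset[\underline\alpha(c_2),\bar\alpha(c_2)]$.

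Finally, for the containment $[\underline\alpha(c),\bar\alpha(c)]\subseteq(0,\alpha^*]$ I would specialize to $c=2$. There $d_\graph(y,z)=2d$ is forced on the relevant configurations, so Axiom \ref{ax:Axiom3Prime} coincides with Axiom~3 of ER, and hence $A(2)$ is exactly ER's admissible range $(0,\alpha^*]$ with $\alpha^*\simeq 1.6$ (so in particular $\bar\alpha(2)=\alpha^*$ and $\underline\alpha(2)=0$). The nesting just established then gives $[\underline\alpha(c),\bar\alpha(c)]\subseteq A(2)=(0,\alpha^*]$ for every $c\in(1,2]$, completing the proof.
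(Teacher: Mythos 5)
Your overall strategy matches the paper's: reduce everything to the sign of $f(z,\alpha,c)$ from \eqref{eq:Funf} (Axioms \ref{ax:Axiom1}, \ref{ax:Axiom2} and homotheticity pin down $\Pol_\alpha$ exactly as in Theorem \ref{thm:ThmMain}; the observation that monotonicity of $f$ in $c$ makes $\bar c = c$ the binding case is correct and consistent with how the paper treats Axiom \ref{ax:Axiom3Prime}), and your monotone-in-$c$ argument for the nesting $[\underline\alpha(c_1),\bar\alpha(c_1)]\subset[\underline\alpha(c_2),\bar\alpha(c_2)]$ is essentially the paper's. However, there is a genuine gap at exactly the point you flag yourself: the claim that the admissible set $A(c)=\{\alpha:\sup_{z>0}f(z,\alpha,c)\le 0\}$ is an interval. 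You reformulate this as unimodality of a threshold function $c^*(\alpha)$ with minimum at $\alpha=1$, say it is ``the main obstacle,'' sketch a tangency system $f=0$, $\partial_z f=0$ with elimination of $z$, and then proceed ``granting unimodality.'' That is the entire technical content of the theorem beyond Theorem \ref{thm:ThmMain}, and the elimination you propose is not carried out (and is not obviously tractable, given the mixed $z$, $z^\alpha$, $z^{1+\alpha}$ terms).

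The paper closes this gap differently, in Lemmas \ref{lem:AlphaGeq1} and \ref{lem:AlphaLeq1}: it studies the value function $v(\alpha,c)=\max_{z\ge 0}f(z,\alpha,c)$ directly and shows (i) for $\alpha\ge 1$, $f$ is concave in $z$, the maximizer satisfies $z<1$ by the first-order condition, and the envelope theorem gives $\partial v/\partial\alpha>0$, so $v$ crosses zero once from below on $[1,\alpha^*]$, yielding $\bar\alpha(c)$; and (ii) for $\alpha\le 1$, $f(z,\alpha,c)\ge 0$ forces $z\ge 1$, and $f$ is decreasing in $\alpha$ on $z\ge 1$, so $v$ is decreasing in $\alpha$ wherever $v\ge 0$, yielding $\underline\alpha(c)$ (or the degenerate case $v<0$ on all of $[0,1]$). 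Together with $\partial v/\partial c<0$ this gives both the interval structure and the strict monotonicity of the endpoints in $c$. If you want to complete your argument, you should either prove your unimodality claim for $c^*(\alpha)$ or switch to these one-sided monotonicity statements for $v(\cdot,c)$, which is what actually makes the proof go through. A second, smaller issue: your assertion that $\underline\alpha(2)=0$ and $A(2)=(0,\alpha^*]$ is only the heuristic identification with ER; the containment in $(0,\alpha^*]$ in the paper comes from $\bar\alpha(c)\le\bar\alpha(2)=\alpha^*$ via the monotonicity in $c$, not from a separate computation at $c=2$.
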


\begin{proof}\textit{ }\\
		See the proof of Theorem \ref{thm:ThmMain} for the proofs of claims regarding Axiom \ref{ax:Axiom1} and Axiom \ref{ax:Axiom2} (the \textit{Sufficency} and  the \textit{Necessity} part). Similarly, Axiom \ref{ax:Axiom3Prime} holds iff  $\alpha$ is such that $f(z, \alpha, c) < 0$ except for at most one point $z$, where $f$ is defined in \eqref{eq:Funf}. To conclude the proof, two observations about  $v(\alpha, c) = \max_{z \geq 0}f(z, \alpha, c)$ are important. First, $v$ is increasing in $\alpha \in (1,2]$ for any fixed $c \in (1,2]$ and changes the sign on the considered interval. Thus, there exists $\bar{\alpha}(c)$ such that $v(\alpha, c)  \leq 0$ for $\alpha \in (1, \bar{\alpha}(c)]$.  Since $v(\alpha, c)$ is decreasing in $c$,  $\bar{\alpha}(c)$ is increasing in $c$. Second, for $\alpha <1$ and fixed $c$,  $v$ decreases in $\alpha$ whenever $v(\alpha, c) \geq 0$ eventually becoming negative as $v(1,c) <0$ for $c>1$. This implies the existence of  $\underline{\alpha}(c) \in [0,1]$. Since $v$ decreases in $c$ we have that  $\underline{\alpha}(c)$ increases in $c$. From these two observations\footnote{{See Lemma \ref{lem:AlphaGeq1} and \ref{lem:AlphaLeq1} in Appendix \ref{Ap:Proofs} for the formal statements and proofs of these two observations.}} we conclude $2 \geq c_2 >c_1>1 \implies [\underline\alpha (c_1), \bar{\alpha}(c_1)] \subset [\underline\alpha (c_2), \bar{\alpha}(c_2)] $. 
\end{proof}

Theorem \ref{thm:ThmGen} shows that as we make Axiom \ref{ax:Axiom3} less demanding,  the range of values of parameter $\alpha$ for which our axioms is satisfied expands monotonically. In particular, if we restrict ourselves to line networks, then the network structure implies that any move described in Axiom \ref{ax:Axiom3} is consistent with Axiom \ref{ax:Axiom3Prime}  for $c=2$, and  Axioms \ref{ax:Axiom1}, \ref{ax:Axiom2} and \ref{ax:Axiom3Prime} can be seen as restatements of the Axioms 1--3 in ER. 

Finally, it should be noted that the claim in Proposition \ref{prop:MaxElement} holds {only} for measures characterized in Theorem \ref{thm:ThmMain}, {and not for any other measure as in \eqref{eq:PolarizationAlpha} with $\alpha \neq 1$.} To see this, take any graph $\graph$ such that $N(\graph) =\{x,y,z\}$ with $0< g_{xy} = g_{xz} \leq g_{yz}$. Then 
for any $\alpha \in \Realo \setminus \{1\}$, there exists a distribution $\vecmass \neq \vecmass^{B}(\graph)$ and $\epsilon >0$  such that $\mathcal{P} (\graph, \vecmass)> \mathcal{P}_{\alpha}(\graph, \vecmass^{B}(\graph))$ whenever $g_{yz} = g_{xz} + \epsilon$. This is a direct consequence of the fact that for $\alpha\neq1$, $\mathcal{P}_{\alpha}$ does not satisfy Axiom \ref{ax:Axiom3} when $c$ is arbitrary close to 1.

\vspace*{-5pt}
\section{Conclusion}\label{sec:Conclusion}

{We have introduced a model of polarization in networks. This model can be used to study the levels and trends of polarization in a wide range of applications. In Section \ref{sec:PolMeasure}, we discussed several examples from political processes in parliaments and public preferences. The potential of our proposal is by no means restricted to these examples as pointed to before. To name a few areas beyond the domain of polity, for which a recent survey is provided by \cite{battaglini2019social}, \cite{bail2016combining} constructs weighted networks between advocacy organizations based on the frequency of words in the shared vocabulary of their posts. \cite{stewart2018examining} construct retweet networks to study the impact of suspicious troll activity on the levels of polarization on Twitter \citep[see, also,][]{conover2011political}. \cite{farrell2016corporate} constructs a network of organizations based on the activities of affiliates to study polarization on climate change issues among organizations. \cite{o2018scientific} propose the network formalism to study polarization in scientific communities around beliefs based on scientific knowledge.  \cite{difonzo2013rumor} employ a network-based approach on capturing polarization of rumor beliefs in the context of social impact theory.}

Reconstructing the axiomatic analysis of ER, we characterized a family of measures within our model. Importing the axiomatic approach needs a careful attention due to the distinct nature of the geodesic distance on networks compared to the Euclidean distance on the real line. Our characterization result shows that the class of measures characterized by ER carries almost intact to the networks. The only bite is in the value of the parameter for the effect of identification on effective antagonism. We find that $\alpha=1$ is a necessary and sufficient condition for the measures of polarization in the form of aggregate antagonisms to satisfy the aforementioned axioms, together with hometheticity. We demonstrate that polarization is maximized when the population is allocated on the two most distant nodes in the network. Finally, we discuss how restricting to specific class of network structures may expand the class of polarization measures.

Our model can be further developed along different dimensions. One promising avenue for future research pertains to extending the measures so as to capture the intra-group heterogeneity, which could also be described as a network. In that case, the identification function should additionally depend on the within-group structure. Another direction for future \label{key}research  concerns the existence of interesting characterizations outside the identification-alienation framework but with the same axioms, as these two are independent.

\newpage
\bibliographystyle{apalike}
\bibliography{Literature_Polarization.bib}

\newpage
\appendix

%

\section{Appendix: Proofs}\label{Ap:Proofs}
\onehalfspacing
In what follows, we denote the maximal value of parameter $\alpha$ in ER  with $\alpha^*$ (so that $\alpha^*\simeq 1.6$). 

In Lemmas \ref{lem:AlphaNeq1}, \ref{lem:AlphaGeq1}, and \ref{lem:AlphaLeq1} we show some properties of function $f$ defined in equation \eqref{eq:Funf} that are invoked in the proofs of Theorem \ref{thm:ThmMain} and \ref{thm:ThmGen}.

  \begin{lemma}\label{lem:AlphaNeq1}
	For any $\alpha \neq 1$ there exists $c>1$ such that $f( z, \alpha, c)>0$.
\end{lemma}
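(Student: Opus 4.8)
The plan is to analyze the function $f(z,\alpha,c)$ from \eqref{eq:Funf} and exhibit, for each fixed $\alpha\neq 1$, a value of $c>1$ making $f$ strictly positive somewhere. Recall that
\begin{align*}
f(z,\alpha,c) = (1+\alpha)\left[z - \frac{z^{\alpha}}{2} + \frac{z^{1+\alpha}}{2}\frac{2-c(2+\alpha)}{1+\alpha}\right] - \frac{1}{2}.
\end{align*}
The key observation is that the sign of the coefficient on the highest-order term $z^{1+\alpha}$ is governed entirely by $2 - c(2+\alpha)$, and since $c$ ranges over $(1,\infty)$ this coefficient can be made either positive or (very) negative; meanwhile $c$ does not appear in the other two terms. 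So my strategy is to evaluate $f$ at the convenient test point $z=1$ and track the $c$-dependence there.

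First I would compute $f(1,\alpha,c)$. Substituting $z=1$ collapses the powers of $z$ to $1$, giving
\begin{align*}
f(1,\alpha,c) = (1+\alpha)\left[1 - \frac12 + \frac{2 - c(2+\alpha)}{2(1+\alpha)}\right] - \frac12 = (1+\alpha)\cdot\frac12 + \frac{2-c(2+\alpha)}{2} - \frac12.
\end{align*}
Simplifying, $f(1,\alpha,c) = \tfrac12\big[(1+\alpha) + 2 - c(2+\alpha) - 1\big] = \tfrac12\big[(2+\alpha) - c(2+\alpha)\big] = \tfrac12(2+\alpha)(1-c)$. This is an elegant identity but it is strictly \emph{negative} for every $c>1$, so $z=1$ is the wrong test point — it confirms $f(1,\alpha,c)<0$ (consistent with the sufficiency computation) rather than producing positivity. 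I would therefore pivot: the main idea must exploit that for $\alpha\neq 1$ the shape of $f$ in $z$ differs from the quadratic case $\alpha=1$, where $f(z,1,c)<0$ held globally. I expect this pivot to be the main obstacle.

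The better route is to choose $c>1$ small enough that the coefficient of $z^{1+\alpha}$ is positive, i.e. $c < \tfrac{2}{2+\alpha}$ when that bound exceeds $1$, or alternatively to examine the local behavior near $z=1$ through the derivative $\partial f/\partial z$ and compare it with the $\alpha=1$ benchmark. Concretely, I would fix $c$ just above $1$, expand $f(z,\alpha,c)$ for $z$ near the point where $f(\,\cdot\,,1,c)$ has its maximum (which is exactly $0$ at $z=1$ when $c=1$), and show that turning $\alpha$ away from $1$ while letting $c\to 1^{+}$ perturbs the maximum value upward. Since $f(z,1,1)\le 0$ with equality only at $z=1$, and $f$ is continuous in all arguments, the real content is a strict-monotonicity statement: $\max_z f(z,\alpha,c)$ strictly increases as $\alpha$ leaves $1$, for $c$ near $1$. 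I would establish this by differentiating $f$ in $\alpha$ at the maximizing $z$, using the envelope theorem so that the $\partial z$-terms drop out, and checking that $\partial f/\partial \alpha\big|_{\alpha=1,\,z=z^*}\neq 0$ with the correct sign on each side of $\alpha=1$. The delicate point, and where I anticipate the bulk of the work, is handling the two cases $\alpha>1$ and $\alpha<1$ separately and verifying that the envelope derivative does not vanish — effectively this is the analytic heart that Lemmas \ref{lem:AlphaGeq1} and \ref{lem:AlphaLeq1} are designed to supply, so I would aim to prove just enough of that monotonicity to force $\max_z f(z,\alpha,c)>0$ for a suitable $c>1$.
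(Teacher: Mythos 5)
Your reduction to the behaviour near $c=1$ is the right instinct (the paper also finishes by continuity in $c$), but the argument you propose to supply the ``analytic heart'' would fail at its first step. By continuity of $f$ in $c$, the lemma reduces to showing $v(\alpha,1)>0$ for $\alpha\neq 1$, where $v(\alpha,c)=\max_{z\ge 0}f(z,\alpha,c)$; since $v(1,1)=0$ (attained at $z=1$), what you must prove is that $\alpha=1$ is a \emph{strict interior minimizer} of $\alpha\mapsto v(\alpha,1)$. A first-order envelope test therefore cannot certify this: the envelope derivative you propose to show is nonzero is in fact forced to vanish, and indeed a direct computation at the maximizer $z^{*}=1$ gives $\left.\partial f/\partial\alpha\right|_{z=1}=(1-c)/2=0$ at $c=1$. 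Moreover, a single number $\left.\partial v/\partial\alpha\right|_{\alpha=1}$ cannot have ``the correct sign on each side of $\alpha=1$''; if it were nonzero, $v$ would \emph{decrease} on one side. Worse, for fixed $c>1$ the local picture is the opposite of what you assert: $\left.\partial v/\partial\alpha\right|_{\alpha=1,c}>0$ (this is the computation in Lemma~\ref{lem:AlphaGeq1}), so as $\alpha$ decreases below $1$ the maximum first moves \emph{further} below zero before eventually turning positive. So ``$\max_z f$ strictly increases as $\alpha$ leaves $1$'' is false locally on the $\alpha<1$ side, and repairing the plan would require a second-order or global monotonicity argument that your proposal does not contain.

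The missing ingredient is an explicit witness $z$, and the paper produces one with an elementary estimate at $c=1$. For $\alpha>1$ take $z\in(0,1)$, so that $z^{\alpha}<z$ and $z^{1+\alpha}<z^{2}$; since $f(z,\alpha,1)=(1+\alpha)\bigl[z-\tfrac{z^{\alpha}}{2}-\tfrac{\alpha z^{1+\alpha}}{2(1+\alpha)}\bigr]-\tfrac12$, these bounds give $f(z,\alpha,1)>-\tfrac12(z-1)(\alpha z-1)$, which is strictly positive for every $z\in\left(\tfrac1\alpha,1\right)$. The case $\alpha<1$ is symmetric with $z\in\left(1,\tfrac1\alpha\right)$. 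Continuity of $f$ in $c$ then yields $f(z,\alpha,c)>0$ for $c>1$ close enough to $1$. Your opening computation $f(1,\alpha,c)=\tfrac12(2+\alpha)(1-c)$ is correct but, as you recognized, only confirms that $z=1$ is the wrong test point; the proof needs the off-center test points above, which your proposal never reaches.
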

\begin{proof}
	We consider two cases, $\alpha>1$ and $\alpha <1$. We show that in each of these cases we can find  $z$ (infinitely many of them) such that $f(z, \alpha,1) >0$. The continuity of $f$  in $c$ then implies  that this will also  be the case for $c>1$ that is close enough to 1.  
	\begin{itemize}
		\item [(i)] When $\alpha >1$ we focus on  $z \in \left(0,1 \right)$. For such $z$ and $\alpha$, we have $z>z^{\alpha}$, and the following holds:
		\begin{align*}
		f(z, \alpha, 1) =& (1+\alpha)\left[z - \frac{z^{\alpha}}{2} - \frac{z^{1+\alpha}}{2(1+\alpha)}\right] -\frac{1}{2} >  (1+\alpha)\left[z - \frac{z}{2} - \frac{z^{2}}{2(1+\alpha)}\right] -\frac{1}{2}\\
		=&-\frac{1}{2}(z-1)(\alpha z-1)
		\end{align*} 
		The last expression is positive for $z \in \left(\frac{1}{\alpha}, 1\right)$
		\item [(ii)] When $\alpha <1$ we focus on  $z \in \left(1, \infty \right)$. For such $z$ and $\alpha$, we have $z>z^{\alpha}$, and analogously to the previous case we conclude that 	$f(z, \alpha, 1) >0$ for $z \in (1, \frac{1}{\alpha})$.
	\end{itemize}
Points (i) and (ii) together with the continuity of $f$ in $c$ imply that for any $\alpha \neq 1$ we can find $c>1$ sufficiently close to 1 such that $f( z, \alpha, c)>0$

\end{proof}

\begin{lemma}\label{lem:AlphaGeq1}

Let $1 \leq \alpha \leq \alpha^* $  and $c \in (1,2]$.  There exists $\upalpha = \upalpha(c) \in (1,\alpha^*]$ such that $\max_{z \geq 0} f(z,\alpha,c) \leq 0$  whenever $\alpha \leq \upalpha $. Furthermore, $\upalpha$ is increasing in $c$.
\end{lemma}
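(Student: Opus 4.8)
The plan is to work directly with $v(\alpha,c):=\max_{z\ge 0} f(z,\alpha,c)$, where $f$ is the function in \eqref{eq:Funf}, and to establish three facts for fixed $c\in(1,2]$: that $v$ is continuous and strictly increasing in $\alpha$ on $[1,\alpha^*]$, that it is strictly decreasing in $c$, and that $v(1,c)<0$. The threshold $\upalpha(c)$ then falls out of the intermediate value theorem, and its monotonicity in $c$ follows from that of $v$. So the first step is to locate the maximizer. Since $\alpha\ge 1$ and $c\in(1,2]$ force the leading coefficient $2-c(2+\alpha)$ of $z^{1+\alpha}$ to be negative, $f\to-\infty$ as $z\to\infty$ and the maximum is attained. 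Writing $\partial_z f=(1+\alpha)g(z)$ with $g(z)=1-\tfrac{\alpha}{2}z^{\alpha-1}+\tfrac{2-c(2+\alpha)}{2}z^{\alpha}$, one checks $g(0)=1>0$, $g(1)=2-\tfrac{\alpha}{2}-\tfrac{c(2+\alpha)}{2}<1-\alpha\le0$, and $g'(z)<0$ for $z\ge1$; hence every maximizer $z^*$ lies in the open interval $(0,1)$. This last fact is the workhorse, since it gives $\ln z^*<0$.

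Monotonicity in $c$ needs no envelope argument: for each fixed $z>0$ we have $\partial_c f=-\tfrac{2+\alpha}{2}z^{1+\alpha}<0$, so $f$ is strictly decreasing in $c$ pointwise, and therefore so is $v$.

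The crux, and the step I expect to be the main obstacle, is showing $v$ is increasing in $\alpha$, because $f$ is \emph{not} pointwise monotone in $\alpha$ (indeed $\partial_\alpha f|_{z=1}=\tfrac{1-c}{2}<0$). By Danskin's theorem the one-sided derivative of $v$ equals $\partial_\alpha f$ evaluated at a maximizer, so it suffices to prove $\partial_\alpha f(z^*,\alpha,c)>0$ whenever $z^*\in(0,1)$ and $\alpha\ge1$. The trick is to use the first-order condition $g(z^*)=0$ to eliminate the awkward terms. Substituting it into $\partial_\alpha f$ collapses the expression, after routine algebra, to
\[
\left.\frac{\partial f}{\partial \alpha}\right|_{z=z^*}
=\frac{(1+\alpha)z^* - (z^*)^{\alpha} - (z^*)^{1+\alpha}}{2+\alpha}
-\Big(\tfrac{(z^*)^{\alpha}}{2}+z^*\Big)\ln z^*.
\]
Both summands are now manifestly positive on $(0,1)$: the second because $\ln z^*<0$, and the first because $(1+\alpha)z^*-(z^*)^{\alpha}-(z^*)^{1+\alpha}=z^*\big[(1+\alpha)-(z^*)^{\alpha-1}-(z^*)^{\alpha}\big]$ and $(z^*)^{\alpha-1}+(z^*)^{\alpha}<2\le1+\alpha$ for $z^*<1$, $\alpha\ge1$. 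Hence $v$ is strictly increasing in $\alpha$.

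Finally I would assemble the statement. From the sufficiency computation in Theorem \ref{thm:ThmMain} (or a one-line evaluation giving $v(1,c)=\tfrac{3(1-c)}{2(3c-2)}$) we have $v(1,c)<0$ for $c>1$, while the defining property of the ER endpoint, $v(\alpha^*,2)=0$, pins the top of the range. I would then define $\upalpha(c)$ to be $\alpha^*$ if $v(\alpha^*,c)\le0$, and otherwise the unique zero of the continuous, strictly increasing map $\alpha\mapsto v(\alpha,c)$ in $(1,\alpha^*)$; then $v(\alpha,c)\le0$ holds exactly for $\alpha\le\upalpha(c)$, and $\upalpha(c)>1$ because $v(1,c)<0$. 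For the monotonicity in $c$, if $c_2>c_1$ then $v(\upalpha(c_1),c_2)<v(\upalpha(c_1),c_1)\le0$ by the strict decrease in $c$, so increasingness in $\alpha$ forces $\upalpha(c_2)>\upalpha(c_1)$, as claimed.
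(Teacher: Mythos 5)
Your proposal is correct and follows essentially the same route as the paper's proof: both work with the value function $v(\alpha,c)=\max_{z\ge 0}f(z,\alpha,c)$, establish that it is decreasing in $c$ and (the key step) increasing in $\alpha$ for $\alpha\ge 1$ by locating the maximizer in $(0,1)$ and substituting the first-order condition into $\partial_\alpha f$, and then apply the intermediate value theorem starting from $v(1,c)<0$. Your execution is locally cleaner in two spots --- pointwise monotonicity in $c$ instead of the envelope theorem, and a simplification of $\partial_\alpha f$ at the critical point into two manifestly positive terms rather than the paper's messier quotient --- but these are refinements of the same argument, not a different one.
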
	

\begin{proof}[Proof of Lemma \ref{lem:AlphaGeq1}] 
We first note that when $\alpha \geq 1$ the value function $v(\alpha, c) = \max_{z \geq 0}f(z, \alpha, c)$ is strictly decreasing in $c$.
For $\alpha \geq 1$, $f$ is concave in $z$. Thus, the maximum of $f$ is given by the first order condition:
	\begin{align}\label{eq:FOCfWithc}
	\frac{1}{2} (1 + \alpha) \left(2 - \alpha z^{\alpha -1} + (2 - (2 + \alpha) c) z^{\alpha} \right)=0.
	\end{align} 
	Taking derivative of the value function $v(\alpha, c) = \max_{z \geq 0}f(z, \alpha, c)$ with respect to $c$, and applying the envelope theorem, we get:
	\begin{align*}
	\frac{\partial v}{\partial c} = \frac{\partial f}{\partial z}\frac{\partial z}{\partial c} +  \frac{\partial f}{\partial c} = \frac{\partial f}{\partial c} = -\frac{1}{2} (\alpha+2) z^{\alpha+1} <0,
	\end{align*}
	so the value function is (strictly) decreasing in $c$. This implies that $v(\alpha, c) >  v(\alpha,2)$, for any $c \in (1,2)$.
	
We show now that for any fixed $c \in (1,2]$,  $v(\alpha,c)$  changes sign from negative to positive when $\alpha$ increases from 1, and that $v(\alpha,c)$ is strictly increasing in $\alpha$  for $\alpha \geq 1$. 	
	
	We know from the observations on the function $f$ defined in \eqref{eq:Funf}  and related discussion in the proof of Theorem \ref{thm:ThmMain} that, for any $c>1$,  $v(1, c) <0$.   Since  $v(2,2) >0$, as pointed out in ER (p. 833) and $v(\alpha, c)$ is decreasing in $c$, it must be that $v(2,c) >0$ for any $c \in (1,2]$.  Therefore  $v(1,c) <0$ and $v(2,c) >0$ for any $c \in (1,2]$. To show that there exist $\upalpha(c)$ from the claim of the Lemma, we show that $v(\alpha, c)$ is increasing in $\alpha$ for values $\alpha \geq 1$. 
	Indeed:
	\begin{align*}
	\frac{\partial v}{\partial \alpha} = \frac{\partial f}{\partial z}\frac{\partial z}{\partial \alpha} +  \frac{\partial f}{\partial \alpha} = \frac{\partial f}{\partial \alpha} = \frac{1}{2} (2 z - z^{\alpha} (1 + c z + (1 + \alpha + (-2 + (2 + \alpha) c) z) \ln z)) >0.
	\end{align*} 

    To see that the above derivative is positive, first note that  the first order condition \eqref{eq:FOCfWithc} implies that at the maximum of $f$:
	\begin{align}\label{eq:zleq1_Lem2}
	z^{\alpha-1} =\frac{2}{\alpha -z\big(2-(2+\alpha)c\big)}. 
	\end{align}
	Equation \eqref{eq:zleq1_Lem2}  together with the fact that $\alpha -1 \geq 0$ and $c > 1$ implies that $z<1$. Indeed, if $z\geq 1$ then the RHS of \eqref{eq:zleq1_Lem2} would be greater than 1, while the LHS of   \eqref{eq:zleq1_Lem2} would be smaller or equal to 1, since the denominator $\alpha - z\big(2-(2+\alpha)c\big)$ would be  greater than 2 since  $(2+\alpha)c >3$ and $z\geq 1$.  Plugging \eqref{eq:zleq1_Lem2} into the expression for $\frac{\partial v}{\partial \alpha}$ from above. we get:
	\begin{align*}
		\frac{\partial v}{\partial \alpha} &=  \frac{1}{2} (2 z - z^{\alpha} (1 + c z + (1 + \alpha + (-2 + (2 + \alpha) c) z) \ln z)) \\
		&=\frac{1}{2} (2 z - \frac{2}{\alpha -z\big(2-(2+\alpha)c\big)} (1 + c z + (1 + \alpha + (-2 + (2 + \alpha) c) z) \ln z)) \\
		&=-z\frac{(1-\alpha) + (2- 3 c - 2 \alpha c)z + \left[1+ 2 \alpha + (-2 +  2c + \alpha c)z\right] \log z}{\alpha -z\big(2-(2+\alpha)c\big)},
	\end{align*}	
 which is clearly positive for $z<1$, $\alpha \geq 1$ and $c>1$.

	Therefore, the intermediate value theorem implies that, for any $c \in (1,2]$ there exist $\upalpha(c) \in (1,2]$ ( $v(1,c) <0$ for any $c>1$) such that $ \max_{z \geq 0} f(z,\alpha,c) \leq 0$ whenever $\alpha \leq  \upalpha(c)$ (with equality only when $\alpha = \upalpha$).
	
	Finally, $\frac{\partial v}{\partial c} <0$ and $\frac{\partial v}{\partial \alpha} >0$  imply that $\upalpha(c)$ increases with $c$ for $c \in (1,2]$, and hence $\upalpha(c) \leq \upalpha (2) = \alpha^*$.
\end{proof}

\begin{lemma}\label{lem:AlphaLeq1}
	Let $0 \leq \alpha \leq 1$ and $c \in (1,2]$. Either $\max_{z \geq 0} f(\alpha,z,c) < 0$ for all $\alpha \in [0,1]$ or there exists $\loalpha = \loalpha(c) \in [0,1]$ such that $\max_{z \geq 0} f(\alpha,z,c) \geq 0$ whenever $\alpha \leq \loalpha(c)$. Furthermore, $\loalpha$ is decreasing  in $c$.
\end{lemma}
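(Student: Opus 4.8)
The plan is to reproduce the architecture of Lemma~\ref{lem:AlphaGeq1}, now on the range $\alpha\in[0,1]$, working with the value function $v(\alpha,c)=\max_{z\ge0}f(z,\alpha,c)$ for $f$ in \eqref{eq:Funf}. Everything reduces to two monotonicity facts. The first is that $\partial f/\partial c=-\tfrac{2+\alpha}{2}z^{1+\alpha}<0$ for $z>0$, so by the envelope theorem $v$ is strictly decreasing in $c$ wherever the maximizer is interior, in particular wherever $v(\alpha,c)\ge0>f(0,\alpha,c)=-\tfrac12$. The second, and the crux, is that $v$ is strictly \emph{decreasing} in $\alpha$ throughout $\{v\ge0\}\cap\{\alpha\in(0,1)\}$, which is the exact mirror of the increasing-in-$\alpha$ conclusion of Lemma~\ref{lem:AlphaGeq1}. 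Note that for $\alpha<1$ the function $f$ is convex-then-concave in $z$ (its second $z$-derivative is proportional to $(1-\alpha)+z(2-(2+\alpha)c)$), so it may carry a spurious interior local minimum; but whenever $v\ge0$ the global maximum is the larger interior critical point, where both the first-order condition \eqref{eq:FOCfWithc} and the second-order condition hold, so the envelope theorem still applies.

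To sign $\partial v/\partial\alpha=\partial f/\partial\alpha\big|_{z^{*}}$ I would substitute \eqref{eq:FOCfWithc}, i.e.\ $z^{\alpha}=2z/D$ with $D=\alpha-z(2-(2+\alpha)c)=\alpha+z\big((2+\alpha)c-2\big)>0$ as in \eqref{eq:zleq1_Lem2}. Since $A:=1+\alpha+\big((2+\alpha)c-2\big)z=D+1$, this collapses the expression to $\tfrac{z}{D}M$ with $M=(\alpha-1)+\big(c(1+\alpha)-2\big)z-(D+1)\ln z$, so the goal is $M<0$. The decisive preliminary, and the sign flip relative to Lemma~\ref{lem:AlphaGeq1}, is that on $\{v\ge0\}$ one has $z^{*}>1$: were $z^{*}\le1$, the logarithm of the first-order condition, $(\alpha-1)\ln z^{*}=\ln(2/D)$, would give $D\le2$, while the simplified value $f(z^{*})=\alpha z^{*}-z^{*}/D-\tfrac12\ge0$ forces $\alpha>\tfrac12$ and $D\ge\frac{2}{2\alpha-1}>2$, a contradiction. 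Hence $z^{*}>1$, $D>2$, and $\ln z^{*}>0$, so in $M$ the terms $(\alpha-1)\le0$ and $-(D+1)\ln z^{*}<0$ are both favorable.

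The hard part is the remaining term $\big(c(1+\alpha)-2\big)z^{*}$, which is positive exactly when $c(1+\alpha)>2$ (possible for $\alpha$ near $1$ and $c$ near $2$). Here I would use $D+1>\big((2+\alpha)c-2\big)z^{*}>\big(c(1+\alpha)-2\big)z^{*}$ (the gap is $c>0$): once $z^{*}\ge e$ this already gives $(D+1)\ln z^{*}>\big(c(1+\alpha)-2\big)z^{*}$ and hence $M<(\alpha-1)<0$, confining the real work to $z^{*}\in(1,e)$. On that residual window I would exploit the lower bound on $z^{*}$ coming from $v\ge0$ (equivalently $\alpha D>1$, i.e.\ $z^{*}>\frac{(1-\alpha)(1+\alpha)}{\alpha((2+\alpha)c-2)}$), or, more cleanly, propagate negativity inward from the boundary $c=2$, where ER's analysis already yields $v(\alpha,2)\le0$ on $(0,\alpha^{*}]$, using $\partial v/\partial c<0$. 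I expect this estimate on $M$ to be the only genuinely delicate computation in the proof.

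Granting $\partial v/\partial\alpha<0$ on $\{v\ge0\}$, the superlevel set $A(c)=\{\alpha\in[0,1]:v(\alpha,c)\ge0\}$ must be downward closed: if it contained $\alpha_0$ but not some $\alpha_1<\alpha_0$, continuity would produce a crossing $\alpha^{*}\in(\alpha_1,\alpha_0]$ with $v(\alpha^{*},c)=0$ and $v<0$ just to its left, contradicting $\partial v/\partial\alpha\big|_{\alpha^{*}}<0$. Thus $A(c)$ is either empty, giving the first alternative, or an interval $[0,\loalpha(c)]$ with $\loalpha(c)=\max A(c)<1$, the strict bound coming from $v(1,c)<0$ established in the proof of Theorem~\ref{thm:ThmMain}; the endpoint $\alpha=0$ is settled directly, since $f(z,0,c)=z(2-c)-1$ gives $v(0,c)=+\infty\ge0$ for $c<2$ and $v(0,2)=-1<0$. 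Finally, since $v$ is strictly decreasing in $c$, the sets $A(c)$ are nested decreasing, so $\loalpha(c)=\max A(c)$ is decreasing in $c$, which completes the claim.
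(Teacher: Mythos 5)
Your reduction is algebraically sound up to and including the identity $\partial v/\partial\alpha\vert_{z^{*}}=\tfrac{z}{D}M$ with $M=(\alpha-1)+\big(c(1+\alpha)-2\big)z-(D+1)\ln z$, and your argument that the maximizer satisfies $z^{*}>1$ on $\{v\ge 0\}$ is workable (the paper obtains the same fact more directly, by checking algebraically that $z<1$ implies $f(z,\alpha,1)<0$ and then using that $f$ decreases in $c$). But the proof stops exactly where the content is: you never establish $M<0$ on the residual window $z^{*}\in(1,e)$ when $c(1+\alpha)>2$, and you say so yourself. Moreover, the ``cleaner'' fallback you offer --- propagating negativity inward from $c=2$ using ER's bound $v(\alpha,2)\le 0$ together with $\partial v/\partial c<0$ --- runs in the wrong direction: since $v$ is decreasing in $c$, negativity at $c=2$ only yields $v(\alpha,c)\ge v(\alpha,2)$ for $c<2$, which gives no information. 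So the central claim, that $v$ is decreasing in $\alpha$ on the superlevel set, is left unproved.

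The paper closes this step by a different and cleaner device that you should adopt: rather than signing $\partial f/\partial\alpha$ only at the maximizer via the first-order condition, it shows $\partial f/\partial\alpha<0$ pointwise for \emph{every} $z\ge 1$, which suffices because the maximizer lies in $z\ge 1$ whenever $v\ge 0$. This is done by observing that $\partial^{2}f/\partial\alpha\,\partial z<0$ for $z\ge 1$ and $\alpha\le 1$, so $\partial f/\partial\alpha$ is largest at $z=1$, where it equals $(1-c)/2<0$. That bypasses the FOC substitution, the case split on the sign of $c(1+\alpha)-2$, and the delicate estimate on $M$ altogether. The concluding bookkeeping in your last paragraph (downward-closedness of the superlevel set, nestedness in $c$, and the boundary values at $\alpha=0$ and $\alpha=1$) is fine once the monotonicity in $\alpha$ is actually in place.
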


\begin{proof}[Proof of Lemma \ref{lem:AlphaLeq1}]
	
	Let $\alpha \leq 1$. We first prove that $f(z,\alpha, c) \geq 0$ only if $z\geq 1$. Then we show that for $z \geq 1$  $f$ is strictly decreasing  in $\alpha$. Therefore
	in that case $v(\alpha, c) = \max_{z \geq 0} f(z, \alpha, c)$ is strictly decreasing in $\alpha$ as well.
	
To show that  $f(z,\alpha, c) \geq 0 \Rightarrow z \geq 1$ we show $z <1 \Rightarrow f(z,\alpha, c) <0$. Since $f(z,\alpha, 1) <0 \Rightarrow f(z,\alpha, c) <0$ ($f$ is decreasing  in $c$), it is sufficient to show that $z <1 \Rightarrow f(z,\alpha, 1) <0$.

	We have 
	\begin{align*}
	f(z,\alpha, 1) =& \frac{1}{2} \left(-1 + 2(1+\alpha)z -(1+\alpha)z^{\alpha} - \alpha z^{1+\alpha}\right)\\
	<&-1 +(1+\alpha)(2z-z)-\alpha z^{1+\alpha}\\
	<&-1 + (1+\alpha)z - \alpha z^2 =(1-\alpha z)(z-1)<0,
	\end{align*}
	where the inequalities follow the fact that $\alpha \leq 1$ and $z<1$.
	
	Next we prove that $f$ is decreasing in $\alpha$  when $z\geq 1$.  Differentiating we get:
	\begin{align*}
	\frac{\partial f}{\partial \alpha} = \frac{1}{2} \left[2z - z^{\alpha} - c z^{\alpha +1} - z^{\alpha} \left(1+ \alpha + (c(\alpha+2)-2\right) z )\ln z \right]
	\end{align*}
	To see that $\frac{\partial f}{\partial \alpha} <0$ we first note that $\frac{\partial^2 f}{\partial \alpha \partial z} <0$, when $z\geq 1$ and $\alpha \leq 1$, which implies that it is sufficient to prove that $	\frac{\partial f}{\partial \alpha}|_{z=1} <0$.  Evaluating this derivative at $z=1$ we have that $\frac{\partial f}{\partial \alpha}|_{z=1}= \frac{1-c}{2}$ which is clearly negative since $c>1$.
	
	Therefore, $v(\alpha,c)=\max_{z \geq 0} f(z, \alpha, c)$ is decreasing in $\alpha$ whenever $v(\alpha, c) \geq 0$.   We choose  $\underline{\alpha}(c)$ to be equal to a zero of function $ v(\alpha, c)$, whenever this zero exists on $[0,1)$, which will be the case when $v(0,c) \geq 0$ (recall that $v(1, c) <0$). Otherwise,  $v(\alpha,c) <0$ for all $\alpha \in [0,1]$. Since $v$ is decreasing in $c$, and decreasing in $\alpha$ whenever $v(\alpha, c) \geq 0$ we have that $\underline{\alpha}(c)$ decreases when $c$ increases.
\end{proof}

\end{document}